\title{Revisiting Token Sliding on Chordal Graphs} 
\author{Rajat Adak}{Indian Institute of Science Bangalore, India  \and \url{https://sites.google.com/view/rajatadak}}{rajatadak@iisc.ac.in}{}{}
\author{Saraswati Girish Nanoti}{Indian Institute of Technology Gandhinagar, India}{nanoti_saraswati@iitgn.ac.in}{}{}
\author{Prafullkumar Tale}{Indian Institute of Science Education and Research Bhopal, India \and \url{https://pptale.github.io/}}{prafullkumar@iiserb.ac.in}{}{}
\authorrunning{Adak, Nanoti and Tale} 
\keywords{Independent Set, Token Sliding, Chordal Graphs, Leafage, W[1]-hardness} 
\patchcmd{\BR@backref}{\newblock}{\newblock($\uparrow$~}{}{}
\patchcmd{\BR@backref}{\par}{)\par}{}{}
\newcommand{\calC}{\mathcal{C}}
\newcommand{\calH}{{\mathcal H}}
\newcommand{\calM}{\ensuremath{{\mathcal M}}}
\newcommand{\calO}{\ensuremath{{\mathcal O}}}
\newcommand{\OO}{\mathcal{O}}
\newcommand{\calT}{\mathcal{T}}
\newcommand{\calV}{\mathcal{V}}
\newcommand{\para}{\textsf{para}}
\newcommand{\NPH}{\NP-hard\xspace}
\newcommand{\WoneH}{\textup{\textsf{W[1]}}-hard\xspace}
\newcommand{\hard}{{hard}}
\newcommand{\yes}{\textsc{Yes}}
\newcommand{\no}{\textsc{No}}
\newtheorem*{proposition*}{Proposition}
\newtheorem{reduction rule}[theorem]{Reduction Rule}
\Crefname{reduction rule}{Reduction~Rule}{Reduction~Rules}
\newtheorem{greedy select}[theorem]{Greedy~Select}\Crefname{greedy select}{Greedy~Select}{Greedy~Select}
\newtheorem{marking-scheme}[theorem]{Marking Scheme}
\newcommand{\defproblem}[3]{
  \vspace{1mm}
\noindent\fbox{
  \begin{minipage}{0.96\textwidth}
  \begin{tabular*}{\textwidth}{@{\extracolsep{\fill}}lr} #1 \\ \end{tabular*}
  {\bf{Input:}} #2  \\
  {\bf{Question:}} #3
  \end{minipage}
  }
  \vspace{1mm}
}
\begin{document}

\maketitle

\begin{abstract}
In this article, we revisit the complexity of the reconfiguration 
of independent sets under the token sliding rule on chordal graphs.
In the \textsc{Token Sliding-Connectivity} problem, 
the input is a graph $G$ and an integer $k$, and 
the objective is to determine whether the reconfiguration graph 
$TS_k(G)$ of $G$ is connected. 
The vertices of $TS_k(G)$ are $k$-independent sets of $G$, 
and two vertices are adjacent if and only if one can transform one 
of the two corresponding independent sets into the other by sliding 
a vertex (also called a \emph{token}) along an edge. 
Bonamy and Bousquet [WG'17] proved that the 
\textsc{Token Sliding-Connectivity} problem is 
polynomial-time solvable on interval graphs but \NP-hard 
on split graphs.
In light of these two results, the authors asked: can we decide the connectivity of $TS_k(G)$ in polynomial time for chordal graphs with 
\emph{maximum clique-tree degree} $d$? 
We answer this question in the negative and prove that the problem is \para-\NP-hard when parameterized by $d$. 
More precisely, the problem is \NP-hard even when $d = 4$. 
We then study the parameterized complexity of the problem for a larger 
parameter called \emph{leafage} and prove that the problem is 
\co-\W[1]-hard.
We prove similar results for a closely related problem called 
\textsc{Token Sliding-Reachability}. 
In this problem, the input is a graph $G$ with two of its 
$k$-independent sets $I$ and $J$, and 
the objective is to determine whether there is a sequence of 
valid token sliding moves that transform $I$ into $J$.
\end{abstract}

\section{Introduction}

A typical reconfiguration problem on a graph consists of two feasible 
solutions $S$, $T$, and a modification rule. 
The objective is to determine whether $S$ can be transformed 
{into} $T$ by repeated applications of the modification rule 
such that at any intermediate step, the set remains a 
feasible solution. 
It {is helpful to imagine} that a token is located {at} every vertex of $S$. 
{Therefore,} the modification operation corresponds to how 
the tokens can be moved. 
Three well-studied operations are 
\emph{token addition and removal}, \emph{token jumping}, and 
\emph{token sliding}. 
In the first operation, one can add or remove the token 
{during} the intermediate stages. 
In the second operation, a token can jump from one vertex to another vertex (usually to a vertex that doesn't have a token). 
In \emph{token sliding}, one can move tokens along edges of the graph, i.e., vertices can only be replaced with {adjacent} vertices.

The reconfiguration version of various classical problems like 
\textsc{Satisfiability}~\cite{DBLP:journals/siamcomp/GopalanKMP09}, 
\textsc{Dominating Set}, \textsc{Independent Set}, 
\textsc{Colouring's}~\cite{DBLP:journals/endm/BonamyB13,DBLP:journals/jgt/FeghaliJP16}, 
\textsc{Kempe Chains}~\cite{DBLP:journals/jct/BonamyBFJ19,DBLP:journals/ejc/Feghali0P17}, 
\textsc{Shortest Paths}~\cite{DBLP:journals/tcs/Bonsma13}, etc., 
{has received significant attention in recent years.} 
{See surveys~\cite{DBLP:journals/algorithms/Nishimura18,DBLP:books/cu/p/Heuvel13} for a comprehensive introduction to these topics.} 
Among these problems, \textsc{Independent Set Reconfiguration} is 
{one of} {the most well-studied} problems with respect to 
all the operations mentioned above, 
{namely} token addition and removal~\cite{DBLP:journals/tcs/KaminskiMM12,DBLP:journals/algorithmica/MouawadN0SS17}, 
token jumping~\cite{DBLP:conf/swat/BonsmaKW14,DBLP:conf/fct/BousquetMP17,DBLP:journals/tcs/ItoDHPSUU11,DBLP:conf/tamc/ItoKOSUY14,DBLP:conf/isaac/ItoKO14}, and 
token sliding~\cite{DBLP:conf/wg/BonamyB17,DBLP:journals/tcs/DemaineDFHIOOUY15,DBLP:conf/isaac/Fox-EpsteinHOU15,DBLP:journals/tcs/HearnD05,DBLP:conf/isaac/HoangU16,DBLP:journals/talg/LokshtanovM19}. 
Recall that for an integer $k$, a $k$-independent set of $G$ is a subset 
$S \subseteq V(G)$ of size $k$ of pairwise non-adjacent vertices. 
In this article, we consider the complexity of the reconfiguration of 
\textsc{Independent Set} problem under the token sliding rule. 
As we focus only on {\textsc{Independent Set},} we refer to it simply as \textsc{Token Sliding} and consider the following two problems.

\defproblem{\textsc{Token Sliding-Connectivity (TS-Connectivity)}}{Graph $G$ and integer $k$}{
Is the $k$-$TS$ reconfiguration graph $TS_k(G)$ of $G$ connected?

The vertices of $TS_k(G)$ are $k$-independent sets of $G$ and 
$(I, J)$ is an edge of $TS_k(G)$ if and only if 
$J \setminus I = \{u\}$, $I \setminus J = \{v\}$ and 
$(u, v)$ is an edge of $G$ for some $u, v\in V(G)$.}

\defproblem{\textsc{Token Sliding-Rechability (TS-Rechability)}}{Graph $G$ and
two $k$-independent set of $I, J$ of $G$.}{Are the vertices corresponding to $I$ and $J$ in the same connected components of $TS_k(G)$?
}

\noindent 
Alternately, in the second variation, the objective is to determine whether
we can transform one independent set to the other by 
token sliding, while the set remains independent.

Hearn and Demaine~\cite
{DBLP:journals/tcs/HearnD05} proved  
that \textsc{TS-Reachability}
is \PSPACE-complete even for planar graphs. 
On the positive side, Kaminski et al.~\cite{DBLP:journals/tcs/KaminskiMM12} and Bonsma et al.~\cite{DBLP:journals/mst/BelmonteKLMOS21} presented polynomial time algorithms 
when the input graph is restricted to cographs and claw-free graphs,
respectively.
Yamada and Uehara~\cite{DBLP:conf/walcom/YamadaU16} showed that any 
two $k$-independent sets
in proper interval graphs can be transformed to each other 
in polynomial steps.
Demaine et al.~\cite{DBLP:conf/isaac/DemaineDFHIOOUY14} described a linear algorithm when
the input graph is a tree.
In the same paper, the authors asked if determining whether 
\textsc{TS-Reachability} is polynomial time for 
interval graphs and then more generally for chordal graphs, 
a strict superclass of interval graphs. 
Belmonte et al.~\cite{DBLP:journals/mst/BelmonteKLMOS21} answered the second
part in the negative by proving that the problem is 
\PSPACE-complete on split graphs, a strict subset
of chordal graphs.
Bonamy and Bousquet~\cite{DBLP:conf/wg/BonamyB17} answered 
the first part 
positively by presenting a polynomial-time 
algorithm for \textsc{TS-Reachability} when input is an
interval graph.
They also proved the following result about  
\textsc{TS-Connectivity}.

\begin{proposition*}[Theorem~$2$ and~$3$ in \cite{DBLP:conf/wg/BonamyB17}]
\label{prop:token-sliding-known-results}
\emph{\textsc{Token Sliding-Connectivity}}
is in \emph{\P} for interval graphs, but
is \emph{\co-\NP-hard}
for split graphs.
\end{proposition*}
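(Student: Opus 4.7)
The plan is to address the polynomial-time algorithm and the hardness reduction separately.

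\textbf{Interval graphs.} I would fix an interval representation of $G$ and order the vertices by the left endpoints of their intervals. Define the canonical leftmost $k$-independent set $I^\star$ greedily: scan the intervals from left to right and select each interval whose left endpoint lies to the right of the previously chosen interval's right endpoint, stopping after $k$ picks. The central claim is that whenever a $k$-independent set exists, i.e.\ $k \le \alpha(G)$, the graph $TS_k(G)$ is connected, so the algorithm reduces to testing $k \le \alpha(G)$, which can be done in linear time on interval graphs. To prove connectivity, I would use an exchange argument: starting from any $k$-independent set $I \ne I^\star$, identify the leftmost position where $I$ differs from $I^\star$, and use the room provided by the interval structure to slide the offending token one step toward its canonical slot, possibly after first shifting other tokens of $I$ out of the way. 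Iterating yields a valid sliding sequence from $I$ to $I^\star$, making $I^\star$ a universal target. The borderline case $k = \alpha(G)$ needs a small separate argument showing that the maximum independent set is either unique (so $TS_k(G)$ is trivially connected) or still reachable by local exchanges driven by the interval ordering.

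\textbf{Split graphs.} I would reduce from an NP-hard problem to the complement of \textsc{TS-Connectivity}, which asks whether $TS_k(G)$ is \emph{disconnected}; NP-hardness of the complement yields co-NP-hardness for \textsc{TS-Connectivity}. The reduction outputs a split graph $G$ with clique $K$ and independent set $S$, together with an integer $k$, so that $TS_k(G)$ is disconnected if and only if the source instance is a YES instance. Since any independent set of $G$ contains at most one vertex of $K$, configurations partition naturally by ``which vertex of $K$, if any, currently holds a token,'' and transitions between these classes are tightly constrained by the $K$-to-$S$ adjacencies: a token in $S$ can slide into $v \in K$ only when $v$ has no other neighbour in the current independent set. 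I would encode variable assignments (or another combinatorial choice from the source problem) through the identity of the $K$-token, and encode global constraints through carefully chosen edges, so that two designated configurations lie in distinct components of $TS_k(G)$ exactly when the source instance is a YES instance.

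The main obstacle will be the hardness construction. Split graphs admit rich adjacencies between $K$ and $S$, so many potential slides exist, and ruling out unintended shortcuts between the intended components requires careful gadgetry and a case analysis of every feasible sliding move. In particular, one must verify that no unanticipated chain of slides short-circuits the gadget and collapses two intended components into one, which is the technically delicate part of controlling the reconfiguration graph's global topology from purely local adjacency information.
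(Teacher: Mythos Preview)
This proposition is not proved in the present paper; it is quoted from Bonamy and Bousquet~\cite{DBLP:conf/wg/BonamyB17} as background, so there is no in-paper proof to compare against. That said, your interval-graph argument contains a genuine error.

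Your ``central claim'' that $TS_k(G)$ is connected whenever $k\le\alpha(G)$ is false for interval graphs, so the algorithm cannot reduce to testing $k\le\alpha(G)$. A small counterexample: take the caterpillar on six vertices with spine $b\text{--}e$ and leaves $a,c$ attached to $b$ and $d,f$ attached to $e$. This is an interval graph (e.g.\ $a=[0,1]$, $b=[1,4]$, $c=[2,3]$, $e=[4,7]$, $d=[5,6]$, $f=[7,8]$), and $\alpha=4$. For $k=2$ the set $\{a,c\}$ is frozen: sliding either token lands on $b$, which is adjacent to the other token. Likewise $\{d,f\}$ is frozen, so $TS_2(G)$ is disconnected even though $2\le\alpha(G)$. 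The Bonamy--Bousquet algorithm therefore has to detect such frozen (or more generally, locally trapped) configurations, and the polynomial-time test is a structural characterisation of when these obstructions occur, not a blanket connectivity statement. Your greedy-canonical-target exchange argument cannot be salvaged without first identifying and handling these obstructions.

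For split graphs your outline is pointed in the right direction but is too unspecific to assess. The actual source problem used in~\cite{DBLP:conf/wg/BonamyB17} is \textsc{Dominating Set} restricted to graphs with no small blocking set (indeed, the present paper's Section~\ref{sec:max-clique-tree-degree} adapts exactly that reduction). The key idea you are missing is that a dominating set $D$ in the source graph is encoded as a frozen configuration entirely inside the independent part $S$ of the split graph: domination is precisely what prevents any token from sliding into the clique $K$, and the ``non-blocking'' promise is what guarantees that every non-dominating configuration can escape. Your sketch, which organises configurations by which clique vertex holds a token, does not yet isolate this mechanism.
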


It is known that for each chordal graph there is a tree, 
called \emph{clique-tree}, whose
vertices are the cliques of the graph and for every vertex 
in the graph, the set of cliques that contain it form a subtree of clique-tree.
See, for example, the note by Blair and Peyton~\cite{10.1007/978-1-4613-8369-7_1}.
The maximal cliques in an interval graph can be ordered such 
that for every vertex, the maximal cliques containing that vertex occur consecutively~\cite{DBLP:journals/algorithmica/GalbyMSST24}.
Hence, for interval graphs, such the maximum clique-tree degree is two.
Whereas, for split graphs the maximum degree of a clique-tree 
can be unbounded, as demonstrated by a star graph.
This prompted Bonamy and Bousquet~\cite{DBLP:conf/wg/BonamyB17}
to explicitly ask the following question.

\smallskip
\noindent \textbf{Question:} \emph{Let $d$ be a fixed constant. 
For an integer $k$ and chordal graph $G$ of the maximum
clique-tree degree at most $d$, 
can the connectivity of $TS_k(G)$ be decided in polynomial time?}
\smallskip

The questions like this have been systematically addressed in the realm 
of \emph{parameterized complexity}.
This paradigm allows for a more refined analysis of the problem’s complexity.
In this setting, we associate each instance $I$ with a parameter $\ell$
which can be arbitrarily smaller than the input size.
The objective is to check whether the problem is `tractable' when 
the parameter is `small'.
A parameter may originate from the formulation of the problem itself (called \emph{natural parameters}) or it can be a property of the input graph (called \emph{structural parameters}).
Parameterized problems that admit an algorithm with running time
$f(\ell) \cdot |I|^{\OO(1)}$ for some computable function $f$, are called 
\emph{fixed parameter tractable} (\FPT).
On the other hand, parameterized problems that are hard for the complexity class \W[1] do not admit such fixed-parameter algorithms,
under standard complexity assumptions.
A problem is \para-\NP-hard if it remains \NP-hard even when the parameter 
$\ell$ is fixed. 
In other words, fixing the parameter does not simplify the problem to a
polynomial-time solution.

For the above problems, the natural parameter is the number of tokens $k$
whereas the maximum clique-tree degree 
is one of the structural parameters. 
It is known that \textsc{TS-Connectivity}~\cite{DBLP:conf/wg/BonamyB17}
and \textsc{TS-Reachability}~\cite{DBLP:conf/tamc/ItoKOSUY14} are \co-\W[2]-hard and
\W[1]-hard, respectively, when parameterized by the number of tokens $k$.
However, the latter problem admits an \FPT~algorithm for planar graphs~\cite{DBLP:conf/isaac/ItoKO14}.
In this article, we study the structural parameterization
of these problems when the input is a chordal graph.
The first parameter we consider is the maximum clique-tree
degree of the chordal graph and answer the question by 
Bonamy and Bousquet~\cite{DBLP:conf/wg/BonamyB17}
in the negative.

\begin{restatable}{theorem}{tsconnnphard}
\label{thm:np-hardness-clique-tree-degree-TS-Conn}
On chordal graphs of maximum clique-tree degree $4$,
\emph{\textsc{Token Sliding-Connectivity}} problem is 
\emph{\co-\NP-hard}.
\end{restatable}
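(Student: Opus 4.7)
The plan is to reduce from a classical NP-hard problem to the complement of $\textsc{TS-Connectivity}$ on chordal graphs of maximum clique-tree degree four: given a 3-CNF formula $\phi$ (or, equivalently, the instance produced by the split-graph reduction of Bonamy and Bousquet), I would build a chordal graph $G$ and an integer $k$ such that $TS_k(G)$ is \emph{disconnected} if and only if $\phi$ is a yes-instance, while ensuring that some clique-tree of $G$ has maximum degree at most $4$. The core technical challenge, compared with the known split-graph hardness, is that split graphs admit clique-trees of arbitrary degree (e.g.\ the central clique of a star-like instance), so the high-degree concentration at that one clique must be simulated using a bounded-degree tree structure.

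The construction I envisage has a \emph{backbone} of maximal cliques arranged along a path, with each pair of adjacent backbone cliques sharing a large separator so that tokens can slide freely along the backbone. To each backbone node we attach at most two pendant \emph{gadget} cliques playing the role of variable or clause gadgets; this yields clique-tree degree at most $2+2=4$. Variable gadgets encode a truth assignment by the choice of which of two designated positions holds a token, while each clause gadget carries a token whose only valid positions correspond to the three literals of the clause. A carefully chosen ``canonical'' independent set $I_0$ will represent the all-false assignment, and I would show that $I_0$ lies in a component of $TS_k(G)$ distinct from some easily-produced ``witness'' configuration exactly when a satisfying assignment exists; when $\phi$ is unsatisfiable, every $k$-independent set can be reconfigured to $I_0$ by sliding tokens along the backbone and locally within gadgets, so $TS_k(G)$ is connected.

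The main obstacle is engineering the gadgets so that information is propagated between far-apart occurrences of the same variable using only the constant-degree backbone. Because the backbone forces each variable's ``truth token'' to traverse a long chain of cliques, spurious token slides along that chain could produce configurations that do not correspond to any consistent assignment; ruling these out---typically by padding with permanent anchor tokens whose only slide-neighbours lie inside a single gadget, and by choosing the total budget $k$ so that such anchors are forced into every reachable independent set---and verifying that no unintended shortcut exists in $TS_k(G)$ is where the bulk of the technical work will sit. Once the gadgets simulate variable consistency and clause satisfaction tightly, the two-direction correctness argument (disconnection from satisfiability, connection from unsatisfiability) should follow the familiar pattern of split-graph TS-hardness proofs, now transferred to a chordal graph whose clique-tree is a bounded-degree caterpillar.
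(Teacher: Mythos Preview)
Your plan heads toward a much harder construction than is needed, and the part you flag as ``the bulk of the technical work''---propagating variable consistency along a long caterpillar backbone using only local slides---is precisely the step that has no obvious solution and that the paper never attempts. Building fresh \textsc{SAT}-style variable and clause gadgets for token sliding on a caterpillar is not known to be easy; anchor tokens of the kind you describe do not automatically enforce that two far-apart copies of the same variable agree, because sliding a token along the backbone temporarily destroys the encoding, and nothing in your sketch prevents the configuration from being ``repaired'' in an inconsistent way at the other end. So as written the proposal has a real gap: the gadget design is simply not there.

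The paper avoids this problem entirely with a very short trick that you almost touch on when you mention ``the instance produced by the split-graph reduction of Bonamy and Bousquet'' but then abandon. That reduction already yields a split graph $G'$ with central clique $C$ and independent part $W$ such that $(G,k)$ has a dominating set (on a non-blocking graph) iff $TS_{k+1}(G')$ is disconnected; the only defect is that its clique tree is a star. The paper's observation is that you can \emph{stretch} this star into a caterpillar by adding auxiliary vertices $x_1,\dots,x_{n+k+2}$, each adjacent to all of $C$, together with private pendants $y_i$ on each $x_i$. Now the maximal cliques $C\cup\{x_i\}$ share $C$ pairwise, so they can be arranged along a path in the clique tree; each neighbourhood clique $N[w_j]$ and each edge $\{x_i,y_i\}$ hangs off a single node of that path, giving maximum degree $4$. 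Because the $x_i$ are universal to $C$ and the $y_i$ are pendants, they are essentially inert for the reconfiguration argument: the frozen-configuration direction is unchanged, and in the connectedness direction any token on an $x_i$ or $y_i$ can be shuttled through $C$ into $W$ in one or two slides. No long-range consistency gadget is required.

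In short: do not redesign the reduction from \textsc{3-SAT}. Start from the Bonamy--Bousquet split-graph instance and add $|W|$ dummy vertices that are complete to the clique, plus one pendant on each dummy; that alone drops the clique-tree degree to $4$ while preserving correctness almost for free.
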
 

Bonamy and Bousquet~\cite{DBLP:conf/wg/BonamyB17} 
adapted their algorithm for \textsc{TS-Connectivity} on interval graphs
to solve \textsc{TS-Reachability} on interval graphs in polynomial time.
Also, Belmonte et al.~\cite{DBLP:journals/mst/BelmonteKLMOS21} 
proved that \textsc{TS-Reachability} is \PSPACE-complete on split graphs.
Hence, it is natural to ask the analogous question regarding
\textsc{TS-Reachability}: 
\emph{Does \emph{\textsc{TS-Reachability}} admit a 
polynomial-time algorithm when the input is a chordal graph 
of the maximum clique-tree degree at most $d$?}
We answer even this question in the negative.

\begin{restatable}{theorem}{tsreachnphard}
\label{thm:np-hardness-clique-tree-degree-TS-Reach}
On chordal graphs of maximum clique-tree degree $3$,
\emph{\textsc{Token Sliding-Reachability}} problem is 
\emph{\NP-\hard}.
\end{restatable}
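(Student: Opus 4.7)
The plan is to prove that \textsc{Token Sliding-Reachability} is \NP-hard on chordal graphs of maximum clique-tree degree $3$ via a polynomial-time reduction from an \NP-complete variant of \textsc{3-SAT}. Given a formula $\varphi$ with $n$ variables and $m$ clauses, I will construct a chordal graph $G$ together with two $k$-independent sets $I, J$ of $G$ such that $J$ is reachable from $I$ in $TS_k(G)$ if and only if $\varphi$ is satisfiable. The graph $G$ will be assembled from variable gadgets $V_1, \ldots, V_n$ and clause gadgets $C_1, \ldots, C_m$ glued along a tree-shaped backbone. Each variable gadget will admit exactly two locally reachable token configurations, corresponding to setting the variable to \true{} or \false{}, while each clause gadget will contain a barrier of tokens that can be slid aside only once one of its incident variable gadgets has committed to the appropriate truth value.

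For correctness, the forward direction amounts to describing an explicit schedule: given a satisfying assignment, I first move the tokens within each $V_i$ to encode its truth value, then sequentially clear each clause barrier using its satisfied literal, and finally park all tokens at the positions specified by $J$. For the reverse direction, I will show that any valid reconfiguration sequence from $I$ to $J$ must, at some intermediate step, place the tokens of each $V_i$ into one of two canonical configurations, and that clearing every clause barrier requires that these canonical configurations jointly satisfy $\varphi$; extracting the truth values from these configurations then yields the desired assignment.

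The main obstacle will be maintaining clique-tree degree at most $3$ while still encoding the Boolean structure of $\varphi$. Because chordal graphs of clique-tree degree $2$ are precisely interval graphs, for which \textsc{TS-Reachability} is polynomial-time solvable by the result of Bonamy and Bousquet~\cite{DBLP:conf/wg/BonamyB17}, every available branching node in the clique-tree must serve a clearly delineated role: some to route a variable's signal to the clauses in which it appears, others to aggregate the three literals within each clause. I anticipate resolving this by interposing auxiliary \emph{router} subcliques that distribute variable information along a binary tree, and by designing clause gadgets that consume their three literals through a small chain of degree-$3$ nodes rather than a single higher-degree branching. A secondary technical subtlety will be verifying that the constructed graph is indeed chordal and that the clique-tree I exhibit is valid, which I expect to do by displaying each vertex's support as a subtree of the backbone and checking the running-intersection property directly.
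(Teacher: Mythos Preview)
Your proposal takes a very different route from the paper, and in this case the paper's approach is dramatically simpler. The paper does not build any SAT gadgets at all: it reduces from \textsc{TS-Reachability} on split graphs, which is already known to be \PSPACE-complete (and hence \NP-hard) by Belmonte et al.~\cite{DBLP:journals/mst/BelmonteKLMOS21}. Given a split instance $(G,I,J)$ with clique $C$ and independent set $U$, the paper constructs $G'$ by taking an isomorphic copy and adding $|U|$ new ``special'' vertices, each adjacent to every vertex of the clique. These special vertices are inert for reconfiguration (a token on one of them blocks every move through the clique, so any sequence avoiding special vertices is equivalent), so reachability is unchanged. Their sole purpose is structural: with them present, the maximal cliques can be arranged as a path $U_1,\dots,U_{|U|}$ (each $U_i$ being $C'\cup\{s_i\}$) with the cliques $N[w_i]$ hanging off as pendants, giving a clique-tree of maximum degree~$3$. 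The whole argument fits in a page.

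By contrast, your plan requires designing variable and clause gadgets from scratch inside chordal graphs of clique-tree degree~$3$, routing each variable's signal through a binary tree, and proving a ``canonical intermediate configuration'' lemma for the reverse direction. None of these pieces is specified, and each is genuinely delicate: in token-sliding reductions the reverse direction (showing that \emph{every} reconfiguration sequence must commit to a consistent assignment) is typically the hardest part, and you have not indicated any mechanism that forces this. Your approach might be made to work, but it would be an order of magnitude longer than necessary and would yield only \NP-hardness, whereas the paper's reduction transparently inherits \PSPACE-hardness as well. I would recommend abandoning the SAT route and instead leveraging the existing split-graph hardness: the only idea you need is that padding the clique with universal-to-$C$ vertices lets you stretch the clique-tree into a caterpillar.
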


Next, we move to our second structural parameter \emph{leafage}.
The {leafage} of chordal graph $G$, denoted by $\ell$, is defined as 
the minimum number of leaves in the tree of a tree representation of $G$.
Habib and Stacho~\cite{DBLP:conf/esa/HabibS09} showed that we can 
compute the leafage of a connected chordal graph in polynomial time. 
In recent years, researchers have studied the structural parameterization of various graph problems on chordal graphs parameterized by the leafage.
Fomin et al.~\cite{DBLP:journals/algorithmica/FominGR20} and Arvind et al.~\cite{DBLP:journals/corr/ArvindNPZ22} proved, respectively, that the \textsc{Dominating Set} and \textsc{Graph Isomorphism} problems on chordal graphs are \FPT\ parameterized by the leafage.
Galby et al.~\cite{DBLP:journals/algorithmica/GalbyMSST24} presented 
an improved \FPT\ algorithm for {\sc Dominating Set} parameterized by the 
leafage, and adapted it to obtain 
similar \FPT\ algorithms for {\sc Connected Dominating Set} and 
{\sc Steiner Tree} on chordal graphs.
They also proved that
{\sc MultiCut with Undeletable Terminals} on chordal graphs
is {\W[1]-hard} when parameterized by the leafage $\ell$.
Barnetson et al.~\cite{DBLP:journals/networks/BarnetsonBEHPR21} and Papadopoulos and Tzimas \cite{DBLP:conf/iwoca/PapadopoulosT22} presented \XP-algorithms running in time $n^{\calO(\ell)}$ for \textsc{Fire Break} and \textsc{Subset Feedback Vertex Set} 
on chordal graphs, respectively.
Papadopoulos and Tzimas \cite{DBLP:conf/iwoca/PapadopoulosT22} also 
proved that the latter problem is \W[1]-\hard\ when 
parameterized by the leafage.
Hochst{\"{a}}ttler et al~\cite{DBLP:journals/dam/HochstattlerHMP21} showed that we can compute the neighborhood polynomial of a chordal graph in $n^{\calO(\ell)}$-time.

For a chordal graph, leafage is a larger parameter than
the maximum clique-tree degree.
However, our next results imply that this parameter is not large enough
to obtain \FPT\ algorithms for either of the problems. 

\begin{restatable}{theorem}{thm:W-hardness-leafage-TS-Conn}
\label{thm:W-hardness-leafage-TS-Conn}
On chordal graphs, the 
\emph{\textsc{Token Sliding-Connectivity}} problem is 
\emph{co-\W[1]-\hard} when parameterized by the leafage $\ell$
of the input graph.
\end{restatable}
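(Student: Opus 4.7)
The plan is to establish co-\W[1]-hardness by a parameterized reduction from a \W[1]-hard problem such as \textsc{Multicolored Independent Set} (parameterized by the number of colour classes $k$) to the complement of \textsc{TS-Connectivity} on chordal graphs parameterized by leafage. Given an instance $(H, V_1, \dots, V_k)$ of \textsc{Multicolored Independent Set}, I would construct a chordal graph $G$ together with an integer $k'$ such that $TS_{k'}(G)$ is \emph{disconnected} if and only if $H$ admits a multicolored independent set, while simultaneously ensuring that the leafage of $G$ is bounded by a function of $k$. Note that, in contrast to \Cref{thm:np-hardness-clique-tree-degree-TS-Conn}, here we can afford the clique-tree to have large degree at internal nodes, provided the number of leaves stays small.

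The core idea is to realise the clique-tree $T$ of $G$ as a \emph{spider}: a small central subtree together with $k$ pendant paths (arms), so that $T$ has leafage $\ell = O(k)$. Each arm is associated with a colour class $V_i$, and the sequence of maximal cliques along arm $i$ encodes the possible choices of a vertex $v \in V_i$ via the position of a distinguished \emph{choice token}. A shared central gadget on the spine contains \emph{witness tokens} whose positions record the choices made in the arms. The adjacency structure of $H$ is encoded into the central gadget so that a designated spine token cannot slide past a particular bag precisely when the recorded choices contain a pair of adjacent vertices of $H$ from two different colour classes. Under this design, the connected components of $TS_{k'}(G)$ should correspond to equivalence classes of consistent choices, and the reconfiguration graph becomes disconnected exactly when at least one class encodes a valid multicolored independent set that traps certain tokens away from the rest.

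Correctness would then rest on two invariants. First, every $k'$-independent set of $G$ is uniquely classified by the positions of the choice tokens, so reconfiguration between configurations in different classes requires the spine token to transit the central gadget. Second, the central gadget admits such a transit only if the recorded choices do \emph{not} witness a multicolored independent set in $H$; otherwise the spine token is permanently blocked, yielding two configurations with no sliding sequence between them. Together, these imply that $TS_{k'}(G)$ is connected if and only if $H$ has no multicolored independent set, which gives co-\W[1]-hardness parameterised by $\ell = O(k)$.

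The principal obstacle is twofold. Structurally, ensuring that $G$ is chordal and that its clique-tree has only $O(k)$ leaves is delicate, because a naive encoding of adjacencies in a central gadget tends to introduce additional branchings; the fix is to route the edge constraints along the spine using a path-like gadget whose local branchings are absorbed into the $k$ existing arms rather than creating new ones. Combinatorially, one must rule out \emph{spurious} sliding sequences that reach a forbidden configuration by moving tokens through the gadgets in unintended orders. I expect this to require a collection of \emph{blocker tokens} whose positions are forced in every reachable independent set; once these positions are pinned down, the only available sliding moves are precisely those that faithfully simulate the combinatorics of \textsc{Multicolored Independent Set}, and the reduction goes through.
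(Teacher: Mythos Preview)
Your high-level framing matches the paper: reduce from \textsc{Multicolored Independent Set}, build a spider-shaped tree model with $O(k)$ leaves, and argue that $TS_{k'}(G')$ is disconnected exactly when $H$ has a multicolored independent set. But the concrete mechanism you sketch is both internally inconsistent and not the one that works. You first say the spine token is blocked ``precisely when the recorded choices contain a pair of adjacent vertices,'' and two sentences later that transit is allowed ``only if the recorded choices do \emph{not} witness a multicolored independent set''; these are opposite conditions. More substantively, a single choice token per arm together with a central spine-transit test is hard to realise in a chordal graph of small leafage: for a central vertex to detect the simultaneous positions of two remote arm tokens its subtree must reach into both arms, and then it interacts with many other positions as well, so isolating exactly one edge test per bag without creating new leaves or spurious escapes is the whole difficulty, not a detail to be filled in later.

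The paper's construction is mechanically quite different. Each arm $T_i$ carries \emph{$n$} tokens, not one, and the selected vertex $p\in V_i$ is encoded by the \emph{split position}: tokens occupy $u_i^1,\dots,u_i^{p}$ on one side of $t_i$ and $w_i^{p+1},\dots,w_i^{n}$ on the other. A separate parking path can absorb all $nk$ tokens, and there are exactly three kinds of escape routes out of an arm: ``pink'' vertices, usable only when $T_i$ holds fewer than $n$ tokens; ``orange'' conditional-free-pass vertices between pairs of arms, usable only when the other arm is empty; and $\mathcal{H}$-type vertices, one per edge of $H$, usable only when the split positions in the two incident arms match the endpoints of that edge. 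If the $k$ split positions encode a multicolored independent set, all three escape mechanisms are simultaneously blocked and no token can leave any arm, so the configuration is unreachable from the all-parked configuration; if no multicolored independent set exists, every configuration can be drained to the parking path. The $n$-token split encoding and the threefold escape mechanism are the actual content of the proof, and neither is present in your plan.
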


\begin{restatable}{theorem}{thm:W-hardness-leafage-TS-Reach}
\label{thm:W-hardness-leafage-TS-Reach}
On chordal graphs, the 
\emph{\textsc{Token Sliding-Reachability}} problem is 
\emph{\W[1]-\hard} when parameterized by the leafage $\ell$
of the input graph.
\end{restatable}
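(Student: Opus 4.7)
The plan is to establish the claimed hardness by a parameterized reduction from \textsc{Multicolored Clique} with solution size $k$ to \textsc{TS-Reachability} on chordal graphs with leafage $\ell = O(k)$. Given an instance $(H, k)$ of \textsc{Multicolored Clique} with colour classes $V_1, \ldots, V_k$, I would construct a chordal graph $G$ together with two explicit $k'$-independent sets $I$ and $J$ so that $I$ can be reconfigured into $J$ by valid token slides if and only if $H$ has a multicoloured clique. The reduction would reuse, with modifications, the gadgetry from the proof of Theorem~\ref{thm:W-hardness-leafage-TS-Conn}, adapted from the Connectivity setting (where yes-instances yield disconnection) to the Reachability setting (where yes-instances yield a specific reconfiguration path).

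Concretely, the host tree of the intersection representation would consist of a central hub $c$ with $k$ long paths $P_1, \ldots, P_k$ attached to it, yielding exactly $k$ leaves and hence leafage $k$. On each path $P_i$ I would plant a sequence of \emph{selector} subtrees, one for each vertex of $V_i$, arranged so that a single distinguished token on $P_i$ is forced to traverse $P_i$ from its leaf towards the hub, and its rest position at a designated ``commit node'' encodes the choice of some $v_i \in V_i$. Near $c$ I would plant \emph{edge-check} subtrees that span portions of pairs of paths and forbid simultaneous configurations encoding non-edges of $H$; these are realized as subtrees running through the hub rather than as new branches, so the leafage stays $k$. The initial independent set $I$ places every selector token at a leaf endpoint of its path, while $J$ places each past its commit node; additional auxiliary tokens pin every other vertex of the intersection representation so that only the selector tokens have freedom to move, and only monotonically towards the hub.

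The correctness argument then splits into the two standard directions. If $H$ admits a multicoloured clique $v_1, \ldots, v_k$, one schedules the selector tokens one at a time, each pausing at the commit position encoding its $v_i$ long enough that no edge-check subtree is ever fully occupied; this is feasible precisely because no pair $(v_i, v_j)$ is a non-edge of $H$. Conversely, any successful reconfiguration sequence must walk each selector token through exactly one commit position, and the non-triggering of edge-check subtrees forces the resulting $k$-tuple to form a clique in $H$.

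The main obstacle I anticipate is the packing of the $\Theta(k^2)$ edge-check subtrees into a host tree with only $k$ leaves while maintaining chordality and the monotone ``commit'' semantics of the selector paths. This is the same geometric difficulty encountered in the W[1]-hardness reduction of Galby et~al.\ for \textsc{MultiCut with Undeletable Terminals} on chordal graphs, and I expect to overcome it by reserving a long shared central corridor where edge-check gadgets are placed in disjoint intervals along the $c$-adjacent portions of the paths, so that interference between distinct edge-checks and between an edge-check and a selector can be controlled. Once these gadgets are in place, the equivalence between the two instances together with the bound $\ell = O(k)$ will complete the parameterized reduction from \textsc{Multicolored Clique} parameterized by $k$ to \textsc{TS-Reachability} parameterized by $\ell$.
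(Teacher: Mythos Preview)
Your proposal sketches a reduction from \textsc{Multicolored Clique} with one selector token per colour class that traverses its path and ``commits'' to a vertex choice, with edge-check subtrees policing pairs. The central difficulty, which you do not resolve, is \emph{simultaneity}: tokens move one at a time, and nothing in your construction forces two selectors to occupy commit positions at the same moment. In a \textsc{No}-instance one can simply slide the selector on $P_1$ all the way from its leaf to its final position in $J$, then the selector on $P_2$, and so on; at no step are two selectors simultaneously parked at commit positions, so no edge-check is ever triggered. Your backward direction (``any successful reconfiguration sequence must walk each selector token through exactly one commit position, and the non-triggering of edge-check subtrees forces the resulting $k$-tuple to form a clique'') silently assumes all selectors sit at commit positions at once, which is exactly what the gadgets do not enforce. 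Note also that in token sliding on independent sets, a vertex adjacent to two token positions does not forbid those positions from both holding tokens; it only matters if some token must \emph{slide through} that vertex, and your edge-check subtrees are not on any selector's path.

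The paper's reduction addresses simultaneity by a very different, heavier mechanism. It places $n$ tokens in each $T_i$ (not one), so the split of those $n$ tokens between the two arms of $T_i$ encodes a vertex of $V_i$; it adds $\binom{k}{2}$ separate ``index'' tokens, each of which must be routed through an $\calH$-type vertex tied to a specific pair $(i,j)$, and that $\calH$-vertex is usable only when the current configurations in $T_i$ and $T_j$ jointly encode an edge of $G$; and it installs a choke vertex $\calC_1$ that prevents any index token from leaving $I_{\text{index}}$ until all $nk$ parking tokens have been deposited into the $T_i$'s, and conversely keeps them there while the index tokens move. This forces all $\binom{k}{2}$ edge checks against one fixed encoded $k$-tuple. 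The resulting leafage is $3\binom{k}{2}+2k+2$, i.e.\ $\Theta(k^2)$, not $O(k)$; your claim of leafage $k$ would be strictly stronger than what the paper achieves, which is another signal that the one-token-per-path design is missing essential structure.
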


\noindent \textbf{Organization}
We use the standard notations that are specified in 
\cref{prelims}.
In Section~\ref{sec:max-clique-tree-degree} and 
Section~\ref{sec:leafage-hardness}, we present the hardness
results with respect to parameter maximum clique tree degree
and leafage, respectively.
We conclude the paper in Section~\ref{sec:conclusion}.

\section{Preliminaries}
\label{prelims}

For a positive integer $q$, we denote the set $\{1, 2, \dots, q\}$ by $[q]$,
and for any $0 \leq p \leq q$, we denote the set $\{p,\dots,q\}$ by $[p,q]$.

\subparagraph*{Graph Theory.}
For a graph $G$, we denote by $V(G)$ and $E(G)$
the set of vertices and edges of $G$, respectively.
Unless specified otherwise, we use $n$ to denote the number of vertices in $G$.
{We denote the edge with endpoints $u, v$ by $uv$}.
For any $v \in V(G)$,
$N_G(v) = \{u \mid uv \in E(G)\}$ denotes the (open) neighbourhood of $v$,
and $N_G[v]=N_G (v) \cup \{v\}$ denotes the closed neighbourhood of $v$.
When the graph $G$ is clear from the context, we omit the subscript $G$.
For any $S \subseteq V(G)$, $G - S$ denotes the graph obtained from $G$ by deleting vertices in $S$.
We denote the subgraph of $G$ induced by $S$, i.e., the graph $G - (V(G) \setminus S)$, by $G[S]$.
We say graph $G$ contains graph $H$ as in \emph{induced subgraph}
if $H$ can be obtained from $G$ by a series of vertex-deletions. A {\em tree} $T$ is a connected acyclic graph. The sets $V_{\geq 3}(T)$ and $V_{=1}(T)$
denote the set of vertices of degree at least $3$,
and of degree equal to $1$, respectively.
The set $V_{\geq 3}(T)$ is also called the set of {\em branching vertices} of $T$
and the set $V_{=1}(T)$ is called the set of {\em leaves} of $T$.
Note that $|V_{\geq 3}(T)| \le |V_{=1}(T)|-1$.
Any node of $T$ which is not a leaf is called \emph{internal}.
For any further notation from basic graph theory, we refer the reader to~\cite{DBLP:books/daglib/Diestel12}.

\subparagraph*{Chordal graphs and Tree representations.}
A graph is called a chordal graph if it contains no induced cycle of length at least four.
It is well-known that chordal graphs can be represented as intersection graphs of subtrees in a tree, that is, 
for every chordal graph $G$, there exists a tree $T$ and a collection $\calM$ of subtrees of $T$ 
in one-to-one correspondence with $V(G)$ 
such that two vertices in $G$ are adjacent if and only if their corresponding subtrees intersect.

 
The \emph{leafage} of chordal graph $G$, denoted by $\ell$, is defined as 
the minimum number of leaves in the tree of a tree representation of $G$.
A tree representation $(T,\calM)$ for $G$ such that the number of leaves in $T$ is $\ell$, can be computed in time $O(|V(G)|^3)$ \cite{DBLP:conf/esa/HabibS09}. 

Let $G$ be a connected graph, and consider $\mathcal{K}_G$ to be the set of its maximal cliques. A tree on $\mathcal{K}_G$ is said to satisfy the \emph{clique-intersection} property if, for every pair of distinct cliques $K, K' \in \mathcal{K}_G$, the set $K\cap K'$ is contained in every clique on the path connecting $K$ and $K'$ in the tree. The tree $T^c$ for a chordal graph $G$ whose vertices are the cliques in $\mathcal{K}_G$ and which satisfies the clique-intersection property is called the \emph{clique-tree} of $G$. It is shown in \cite{10.1007/978-1-4613-8369-7_1} that the clique-tree $T_c$ for a chordal graph $G$ is isomorphic to its intersection tree $T$. Since in any tree the number of leaves is always greater than or equal to the maximum degree of a vertex, the leafage of a chordal graph $G$ is at least the 
maximum degree of its clique-tree $T^c$.

\subparagraph*{Parameterized Complexity.}
The input of a parameterized problem comprises an instance $I$,
which is an input of the classical instance of the problem,
and an integer $k$, which is called the parameter.
A parameterized problem $\Pi$ is said to be \emph{fixed-parameter tractable} (\FPT\ for short) %
if, given an instance $(I,k)$ of $\Pi$,
we can decide whether $(I,k)$ is a \yes-instance of $\Pi$
in time $f(k)\cdot |I|^{\OO(1)}$
for some computable function $f$ depending only on $k$.
A parameterized problem $\Pi$ is said to be \para\NPH\ %
if it is \NPH even for a constant value of $k$. 
For the purposes of this paper, we define a problem to be 
\WoneH if it is at least as hard as 
\textsc{MultiColored Independent Set (MultiCol Ind-Set)}
or \textsc{MultiColored Clique (MultiCol Clique)}.
In the first problem, an input is a graph $G$ with a 
partition $\langle V_1,V_2,\ldots,V_k\rangle$ of $V(G)$ such that each $V_i$ 
is an independent set and contains exactly $n$ vertices, and integer $k$.
The question is to determine whether $G$ contain 
an independent set of size $k$ that contains exactly one vertex from 
each partition.
The second problem is defined in a similar way.
If a problem is \WoneH, then it is unlikely to have an \FPT algorithm.
For more details on parameterized algorithms, and in particular parameterized branching algorithms,
we refer the reader to the book by Cygan et al.~\cite{DBLP:books/sp/CyganFKLMPPS15}.


\section{Parameterized by Maximum Clique-Tree Degree}

\label{sec:max-clique-tree-degree}

\subsection{Hardness for \textsc{Token Sliding-Connectivity}}

In this subsection, we prove Theorem~\ref{thm:np-hardness-clique-tree-degree-TS-Conn}.
We present a polynomial time reduction from the
\textsc{Dominating Set on Non-Blocking Graphs}
which is \NPH~\cite{DBLP:conf/wg/BonamyB17}.
For a graph $G$ and a subset $S$ of $V(G)$, a vertex $x\in V(G)$ is said to 
be a \emph{private neighbour} of $s\in S$ if $x\in N(s)$ and 
$x\notin N[t]$ for any $t\in S\setminus \{s\}$. 
A set of vertices $S\subset V(G)$ is a blocking set if no vertex in $S$ 
has a private neighbour with respect to $S$. 
In the \textsc{Dominating Set on Non-Blocking Graphs} problem,
an input is a graph $G$ and integer $k$ such that $G$ has no blocking set of size at most $2k-1$.
The objective is to determine whether $G$ contains a dominating set
of size at most $k$.

\noindent\textbf{Reduction:} 
Our reduction closely follows the one presented 
in~\cite{DBLP:conf/wg/BonamyB17}.
Let $(G,k)$ be an input instance of 
\textsc{Dominating Set on Non-Blocking Graphs}. 
The reduction constructs an instance 
$(G^\prime,k^\prime)$ of the \textsc{TS- Connectivity} problem.
Suppose $V(G) = \{v_1,v_2,\ldots,v_n\}$. 
\begin{itemize}
\item Add the vertices in the sets $C=\{c_1,c_2,\ldots,c_{n+k+1}\}$ 
and $W=\{w_1,w_2,\ldots,w_{n+k+2}\}$ in $V(G^\prime)$.
Add the edges $c_ic_j$ in $E(G^\prime)$ for $i\neq j\in [n+k+1]$
to ensure that $G^\prime[C]$ is a clique.
Set $W$ will remain an independent set in $G^{\prime}$.
\item For each $i,j\in [n]$, if $v_i$ and $v_j$ are adjacent in $G$ then add 
the edges $c_iw_j$ and $c_jw_i$ to $E(G^\prime)$.
\item Add the edge $c_iw_i$ in $E(G^\prime)$ for $i \in [n+k+1]$. 
\item Make vertex $w_{n+k+2}$ adjacent to all the vertices in $C$
that correspond to vertices in $V(G)$, i.e., 
add edges $c_iw_{n+k+2}$ in $E(G^\prime)$ for each $i\in [n]$.
\item Add sets of vertices $X=\{x_1,x_2,\ldots,x_{n+k+2}\}$ 
and $Y=\{y_1,y_2,\ldots,y_{n+k+2}\}$ in $V(G^\prime)$.
Also, add the edges $x_iy_i$ for $i\in[n+k+2]$ and edges 
$x_ic_j$ for each $i\in [n+k+2]$ and $j\in [n+k+1]$ in $G^\prime$.
That is, each vertex in $X$ is adjacent to all the vertices in $C$,
and $G^\prime[X \cup Y]$ consists of a matching as in 
\cref{TSC_deg_coNPH}. 
\end{itemize}
\begin{figure}[t]
    \centering
    \includegraphics[scale=0.35]{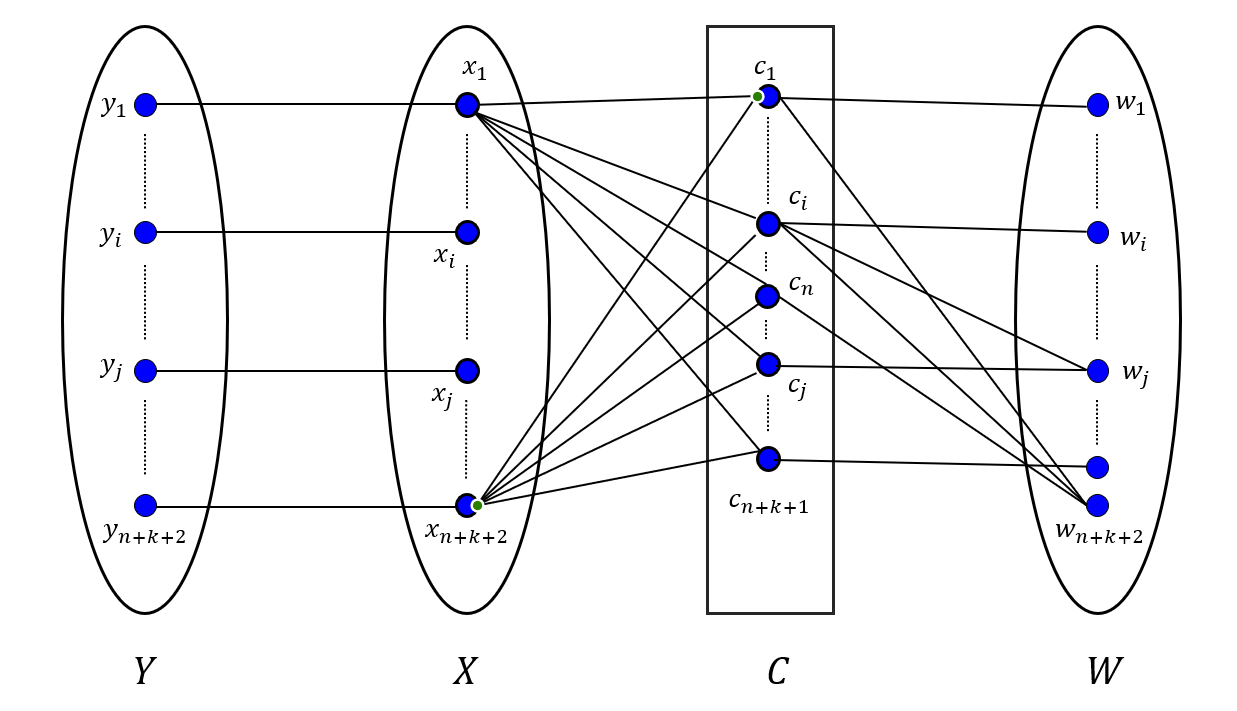}

    \includegraphics[scale=0.35]{./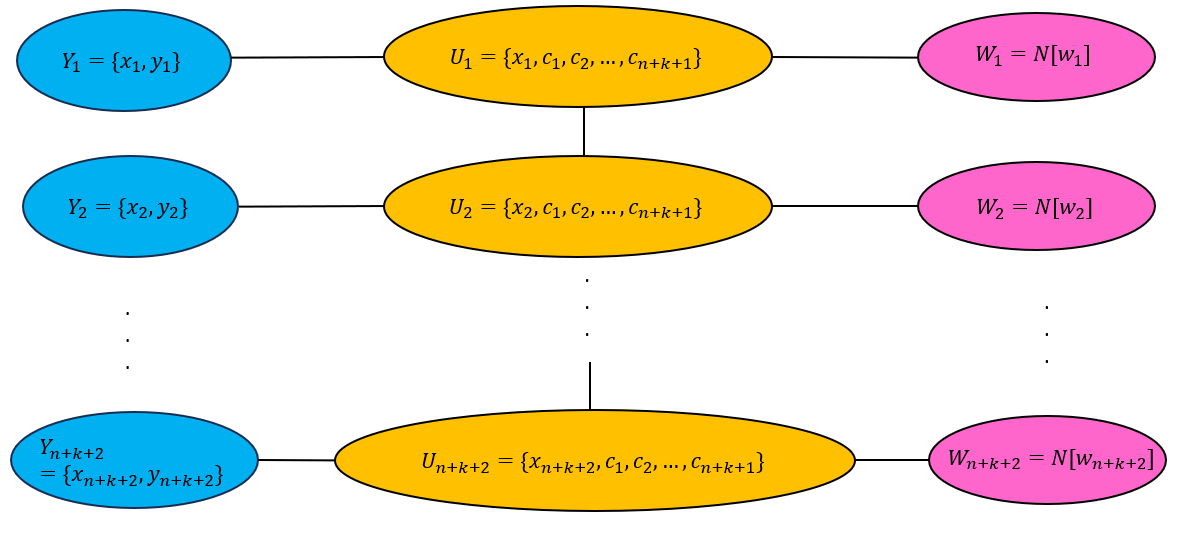}
    \caption{The reduced instance for the \textsc{Token Sliding Connectivity} problem parameterized by the maximum clique-tree degree and its corresponding clique-tree.}
    \label{TSC_deg_coNPH}
\end{figure}

The reduction sets $k^\prime=k+1$. 
This completes the description of the reduction.  
Next, we prove its correctness and graph $G^{\prime}$ satisfies
the desired properties.

\begin{lemma}
$G^\prime$ is a chordal graph with maximum clique-tree degree at most $4$.
\end{lemma}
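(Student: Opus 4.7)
The plan is to describe all maximal cliques of $G'$ explicitly and then exhibit a clique-tree $T^c$ of maximum degree at most $4$; by the clique-intersection characterization of chordal graphs recalled in Section~\ref{prelims}, this simultaneously yields chordality of $G'$ and the bound on the maximum clique-tree degree. I would enumerate the maximal cliques by a case split on which of the four vertex classes $C, W, X, Y$ a maximal clique meets. Since $N_{G'}(y_i)=\{x_i\}$, any maximal clique meeting $Y$ is $\{x_i,y_i\}$. A maximal clique containing some $x_i$ but no $y$-vertex is contained in $N[x_i]\setminus Y = \{x_i\}\cup C$ and, since $\{x_i\}\cup C$ is itself a clique, must equal it. A maximal clique disjoint from $X\cup Y$ lies in $C\cup W$; as $W$ is independent it uses at most one $w_j$, and if it does it equals $\{w_j\}\cup N_C(w_j)$, where $N_C(w_j):=N_{G'}(w_j)\cap C$ (a clique entirely inside $C$ cannot be maximal since $x_1$ extends it). Thus the maximal cliques of $G'$ are exactly
\[
K_i^{xc} := \{x_i\}\cup C,\qquad K_i^{xy} := \{x_i,y_i\},\qquad K_j^{wc} := \{w_j\}\cup N_C(w_j),\quad i,j\in[n+k+2].
\]

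Next, I would construct the candidate clique-tree $T^c$ as follows: arrange $K_1^{xc},K_2^{xc},\dots,K_{n+k+2}^{xc}$ along a path (the \emph{backbone}), and for every $i\in[n+k+2]$ attach $K_i^{xy}$ and $K_i^{wc}$ as two pendants of $K_i^{xc}$. To verify the clique-intersection property, I would check the equivalent condition that, for each vertex $v\in V(G')$, the set of maximal cliques containing $v$ induces a subtree of $T^c$. For $y_i$, $w_j$, and $x_i$ this is immediate, since each such vertex lies in only one or two adjacent cliques. For a vertex $c_\ell\in C$, the maximal cliques containing $c_\ell$ are the full backbone $\{K_i^{xc}\}_{i\in[n+k+2]}$ together with those pendants $K_j^{wc}$ satisfying $c_\ell\in N_C(w_j)$; this is the backbone path augmented by a set of attached leaves, still a subtree. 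Hence $T^c$ is a valid clique-tree, and in particular $G'$ is chordal.

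Finally, I would read off the degrees of $T^c$: interior backbone cliques have degree $2+2=4$, the two backbone endpoints have degree $1+2=3$, and every pendant clique is a leaf. Therefore $T^c$ has maximum degree $4$, giving the stated bound on the maximum clique-tree degree of $G'$. The only delicate step is the enumeration in the first paragraph; in particular one must rule out cliques that mix two distinct $w_j$'s (impossible because $W$ is independent) or two distinct $x$-vertices (impossible because $G'[X\cup Y]$ is a matching). Once the list of maximal cliques is certified, the construction of $T^c$ and the subtree verification are mechanical.
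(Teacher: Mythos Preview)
Your proposal is correct and follows essentially the same approach as the paper: both enumerate the maximal cliques of $G'$ as the $(n+k+2)$-many cliques $\{x_i\}\cup C$, the cliques $N[w_j]$, and the edges $\{x_i,y_i\}$, then build the identical clique-tree consisting of a backbone path on the $\{x_i\}\cup C$ cliques with two pendants at each backbone node, and verify the clique-intersection property vertex by vertex. The only cosmetic difference is that the paper first observes chordality directly (split graph plus pendants) and then exhibits the clique-tree, whereas you deduce chordality from the existence of the clique-tree; the constructions and degree count are the same.
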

\begin{proof}
Note that $G^\prime[C\cup X\cup W]$ is a split graph as 
$C$ induces a clique and $X\cup W$ induces an independent set 
in $G^\prime$. 
The graph $G^\prime$ is formed by adding a pendant vertex 
to each vertex in $X$ in the graph $G^\prime[C\cup X\cup W]$;
hence the graph $G^\prime$ is a chordal graph. 

We now construct a clique-tree $\calT_c$ for $G^\prime$, 
and show that it has a maximum degree of $4$. 
Add a vertex $U_i$ in $\calT_c$ corresponding to the clique 
induced by the vertices $\{x_i,c_1,c_2,\ldots,c_{n+k+1}\}$ 
in $G^\prime$ for $i \in [n+k+2]$. 
Similarly, add a vertex $W_i$ in $\calT_c$ corresponding 
to the clique induced by the vertices in $N[w_i]$ in $G^\prime$ 
for $i\in [n+k+2]$. 
Add a vertex $Y_i$ corresponding to the clique (i.e., edge) induced 
by the vertices $\{x_i,y_i\}$ for $i\in[n+k+2]$. 
Add an edge from $U_i$ to $U_{i+1}$ for each $i\in [n+k+1]$. 
Add an edge from $U_i$ to $W_i$ for each $i\in [n+k+2]$.
Add an edge from $U_i$ to $Y_i$ for each $i\in [n+k+2]$. 
Thus, the maximum degree of $\calT_c$ is $4$. 
Each vertex in $V(\calT_c)$ corresponds to a maximal clique in $G^\prime$.
This tree is also shown in \cref{TSC_deg_coNPH}.

Next, we prove that this tree satisfies the clique-intersection property.
Each vertex in $Y\cup W$ appears in only one vertex (bag) of $\calT_c$.
Each vertex $x_i\in X$, for some $i \in [n+k+2]$, 
appears in the vertex (bag) $Y_i$ and all the vertices $U_j$ 
where $j\in [n+k+2]$. 
Since the vertices $U_j$ where $j\in [n+k+2]$ induce a path in 
$\calT_c$ and the vertex $Y_i$ is adjacent to the vertex $U_i$;
the set of vertices containing $x_i$ induces a connected sub-tree of 
$\calT_c$. 
Now any vertex $c_i\in C$ (for some $i\in [n+k+1]$) appears 
in all the bags $U_j$ where $j\in [n+k+2]$, and possibly some 
of the bags $W_\ell$ where $\ell\in [n+k+2]$.
This induces a path with some pendant vertices, 
which is a connected subtree of $\calT_c$. 
Thus, $G^\prime$ has a clique-tree with maximum degree $4$.
\end{proof}
\begin{lemma}
\label{lemma:TS-conn-max-deg}
$(G,k)$ is a \yes\ instance of the
\emph{\textsc{Dominating Set on Non-Blocking Graphs}} 
problem if and only if $(G^\prime,k^\prime)$ is a \no\ instance of the
\emph{\textsc{TS-Connectivity}} problem. 
\end{lemma}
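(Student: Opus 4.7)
The plan is to prove both directions of the biconditional, identifying the disconnection of $TS_{k'}(G')$ with the existence of a particular frozen (isolated) independent set arising from a dominating set of $G$.

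For the forward direction, given a dominating set $D = \{v_{i_1}, \ldots, v_{i_k}\}$ of $G$, I would consider the independent set
\[
I_D \;=\; \{w_{i_1}, w_{i_2}, \ldots, w_{i_k}, w_{n+k+2}\}
\]
of size $k+1$; independence is immediate since $W$ is independent in $G'$. The goal is to show $I_D$ is isolated in $TS_{k+1}(G')$, which together with the existence of non-isolated configurations such as $\{y_1, \ldots, y_{k+1}\}$ gives disconnection. Since every neighbour of a $W$-vertex lies in $C$, each candidate slide out of $I_D$ has the form $w \to c_j$ and requires $c_j$ to be non-adjacent to all other tokens. For a slide from $w_{i_\ell}$, the edge $c_j w_{i_\ell}$ forces either $j = i_\ell$ or $v_j v_{i_\ell} \in E(G)$, both of which give $j \le n$; but then $c_j$ is adjacent to $w_{n+k+2} \in I_D$ and the slide is invalid. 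For a slide $w_{n+k+2} \to c_j$ (necessarily $j \in [n]$), validity requires $c_j$ non-adjacent to every $w_{i_\ell}$, i.e., $v_j \notin D$ and no $v_{i_\ell} \in D$ is adjacent to $v_j$, contradicting that $D$ dominates $G$. Hence $I_D$ is frozen and $TS_{k+1}(G')$ is disconnected.

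For the reverse direction, I would prove the contrapositive: assume $G$ admits no dominating set of size at most $k$, and show $TS_{k+1}(G')$ is connected by transforming every $(k+1)$-independent set $I$ of $G'$ to a canonical target $I_\star = \{y_1, \ldots, y_{k+1}\}$ via a sequence of valid slides. The reconfiguration proceeds layer by layer: a token on $X$ can be slid directly to a free $y_\ell$ (using that $x_\ell$'s only non-$C$ neighbour is $y_\ell$); a token on $C$ can be moved within the clique $C$ to some $c_j$ whose corresponding $x_\ell$ is free, and then $c_j \to x_\ell \to y_\ell$; the critical case is a token on $W$, which must first be pushed into $C$ and then onwards to $Y$. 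For this $W \to C$ step, a case analysis on whether $w_{n+k+2} \in I$ and on the image of $I \cap \{w_1,\ldots,w_n\}$ in $V(G)$ shows that if no valid $W \to C$ slide exists at all, then the projection of $I \cap W$ to $V(G)$ dominates $G$ and has size at most $k$, contradicting the hypothesis; this is precisely the converse of the forward computation. The non-blocking assumption on $G$ is invoked as an auxiliary tool to break up small all-$W$ configurations via private neighbours.

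The main obstacle is the reverse direction: setting up the canonical target and showing every $(k+1)$-independent set can reach it requires orchestrating a sequence of slides that each preserves independence, with careful bookkeeping when $w_{n+k+2}$ is present and when $I$ uses the padding vertices $w_{n+1},\ldots,w_{n+k+1}$ whose only $C$-neighbour is the diagonal one. The forward direction, by contrast, reduces to a clean verification once the witness $I_D$ is identified.
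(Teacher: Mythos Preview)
Your proposal is correct and tracks the paper's argument closely. The forward direction is identical: the frozen configuration $\{w_{i_1},\ldots,w_{i_k},w_{n+k+2}\}$ is exactly the witness the paper uses, with the same two-line verification.

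For the reverse direction you choose the canonical target $I_\star=\{y_1,\ldots,y_{k+1}\}$, whereas the paper uses $J=\{w_{n+1},\ldots,w_{n+k+1}\}$ and the progress measure $|I\cap J|$. Both work and rely on the same two engines---non-domination (to free a $c_j$ for the token on $w_{n+k+2}$) and non-blocking (to find a private neighbour when all $W$-tokens lie in $\{w_1,\ldots,w_n\}$). The paper's target makes one case shorter (a token on $c_j$ with $j>n$ slides directly into $J$), while your target routes everything uniformly through $C\to X\to Y$. Two small points to tighten in a full write-up: a token on $x_\ell$ slides only to its own $y_\ell$ (which is automatically free since $x_\ell y_\ell\in E(G')$), not to an arbitrary $y$-vertex; and your sentence ``if no valid $W\to C$ slide exists then the projection dominates $G$ and has size at most $k$'' literally covers only the subcase $w_{n+k+2}\in I$---the complementary subcase is precisely where non-blocking is needed, as you note in the following sentence. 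The paper makes this split explicit as its Cases~IV and~V.
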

\begin{proof}
Suppose $(G,k)$ is a \yes\ instance of the
\textsc{Dominating Set on Non-Blocking Graphs} problem. 
Without loss of generality, let $D=\{v_1,v_2,\ldots,v_k\}$
be a dominating set in $G$. 
We construct an independent set $I$ of size $k+1$ in $G^\prime$ 
such that no token on $I$ can be moved. 
Define $I :=\{w_1,w_2,\ldots,w_k,w_{n+k+2}\}$. 
Clearly, the token on $w_{n+k+2}$ cannot be moved because 
$N(w_{n+k+2})=\{c_i \mid i\in [n]\}$ and 
if we move this token to $c_j$ where $j\in [n]$, 
it will be adjacent to the token on $w_\ell$ for some 
$\ell \in [k]$ as $v_j\in N[v_\ell]$ for some $\ell\in [k]$ as $D$
is a dominating set in $G$. 
Now, if we move the token on $w_i$ for some $i\in [k]$ to any vertex 
in $N(w_i)\subset \{c_1,c_2,\ldots,c_n\}$, this will be adjacent to the 
token on $w_{n+k+2}$ because $N(w_{n+k+2})= \{c_1,c_2,\ldots,c_n\}$.
Thus $I$ will form an isolated vertex in the configuration graph 
$TS_{k+1}(G^\prime)$. 
As there are another independent sets of size $k+1$ in $G^\prime$,
such as $\{w_{n+1}, w_{n+2}, \dots, w_{n+k+1}\}$, 
$(G^{\prime}, k)$ is a \no\ instance of the \textsc{TS-Connectivity} problem. 

Suppose we have a \no\ instance of the
\textsc{Dominating Set on Non-Blocking Graphs} problem, 
then we show that $(G^\prime,k+1)$ is a \yes\ instance of the
\textsc{TS-Connectivity} problem. 
Define independent set $J:=\{w_{n+1},w_{n+2},\ldots,w_{n+k+1}\}$.
We show that any independent set $I$ of size $k+1$ in 
$G^\prime$ can be reconfigured to $J$ using a sequence 
of valid token sliding moves. 
It is sufficient to show that the independent set $I$ can be reconfigured 
into an independent set $I^\prime$ such that $|I^\prime\cap J|>|I\cap J|$. 
We will demonstrate this through a case analysis depending 
on how $I$ intersects the partitions $C, X, Y$, and $W$.

\emph{Case $(I)$: $I\cap C\neq \emptyset$:}
Let $\{c_i\}=|I\cap C|$ as $I$ is an independent set and 
hence it cannot contain two or more vertices from $C$. 
Now the other tokens in $I$ can only be in $Y\cup W$. 
If $i>n$, move the token on $c_i$ to $w_i$. 
This will give the desired independent set $I^\prime$. 
If $i\leq n$, bring the token on $I$ to $c_j$ where $j\in [n+1,n+k+1]$
and $w_j$ has no token. 
This is a valid move since $c_j$ has no neighbours in $Y\cup W$ 
other than $w_j$. 
Now move this token from $c_j$ to $w_j$ to get the desired 
independent set $I^\prime$.

\emph{Case $(II)$: $I\cap C=\emptyset$ but $I\cap X\neq \emptyset$:}
Let $x_i\in I\cap X$ for some $i\in [n+k+2]$. 
If there are more than one tokens in $I$ on the vertices in $X$, 
slide the tokens on vertices $x_j$ where $j\neq i, j\in [n+k+2]$ to $y_j$. 
There can be at most $k$ tokens in $W$ 
as there is at least one token in $X$.
Now bring the token on $x_i$ to $c_\ell$ where $\ell\in [n+1,n+k+1]$ and there is no token on $w_\ell$. This is a valid move since $c_\ell$ has no neighbours in $Y\cup W$ other than $w_\ell$. Now bring this token from $c_\ell$ to $w_\ell$ to get the desired independent set $I^\prime$.

\emph{Case $(III)$: $I\cap C=I\cap X=\emptyset$ but $I\cap Y\neq \emptyset$:} 
Suppose $y_i\in I \cap Y$. Bring the token on $y_i$ to $x_i$. This move is valid because $X\cup C$ has no token. Now proceed as in Case~$(II)$.

\emph{Case $(IV)$: $I\cap C=I\cap X=I\cap Y=\emptyset$ and $w_{n+k+2}\in I\cap W$:}
Consider the set $D_1=\{v_i| \,w_i~\text{has a token and}~ i\in [n]\}$.
The cardinality of this set is at most $k$.
The set $D_1\subset V(G)$ cannot be a dominating set as $G$ does not have a dominating set of size at most $k$. 
Therefore there exists $v_j$ for some $j\in [n]$ such that ${N[v_j]\cap D_1}=\emptyset$. 
The vertex $c_j$ is not be adjacent to any vertex in $W\cap I$ 
except $w_{n+k+2}$. 
Move the token on $w_{n+k+2}$ to $c_j$ and proceed as in Case~$(I)$.

\emph{Case $(V)$: $I\cap C=I\cap X=I\cap Y=\emptyset$ and $w_{n+k+2}\notin I\cap W$:} 
Consider the set $D_2=\{v_i| \,w_i~\text{has a token and}~i\in [n]\}$. 
It will have size at most $k+1$. 
The set $D_2 \subset V(G)$ cannot be a blocking set as 
$G$ does not have a blocking set of size at most $2k-1$. 
Therefore, there exists $v_j\in D_2$ for some $j\in [n]$ 
such that $v_j$ has a private neighbour $v_\ell$ for some $\ell\in[n]$. 
Move the token on $w_j$ to $c_\ell$. 
This will not be adjacent to any other token in $W$. 
Now proceed as in Case~$(I)$.

Hence, if $(G,k)$ is a \no\ instance of the \textsc{Dominating Set} problem, then $(G^\prime,k)$ is a \yes\ instance of the \textsc{TS-Connectivity} problem.
This completes the proof of the claim.
\end{proof}

These arguments, along with the fact that reduction 
takes polynomial-time, imply that \textsc{TS-Connectivity} is {\co-\NP-hard} on chordal graphs with a maximum clique-tree degree of $4$.

\subsection{Hardness for \textsc{Token Sliding Reachability}}
In this subsection, we present a reduction used to prove \Cref{thm:np-hardness-clique-tree-degree-TS-Reach}.
We present a polynomial time reduction from
\textsc{TS-Reachability} on split graphs
which is known to be \NPH~\cite{DBLP:journals/mst/BelmonteKLMOS21}.

 \begin{figure}
    \centering
    \includegraphics[width=5cm]{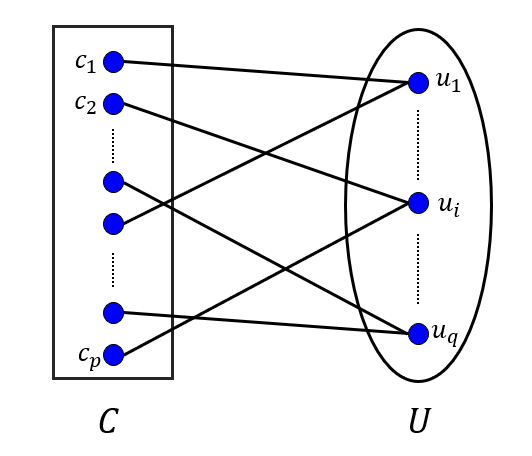}
    \caption{The input Instance of \textsc{TS-Reachability}, $G$ is a split and $C$ is a clique graph.}
    \label{TSR_deg}
\end{figure}

\textbf{Reduction}
Let $(G,I,J)$ be the input instance of \textsc{TS-Reachability} 
problem where $G$ is a split graph with a clique $C=\{c_1,c_2,\ldots,c_p\}$ 
and an independent set $U=\{u_1,u_2,\ldots,u_q\}$ and
$I,J$ are two $k$-independent sets in $G$.
The reduction constructs an instance 
$(G^\prime,I^\prime,J^\prime)$ of the \textsc{TS-Reachability} problem
as follows.
\begin{itemize}
\item For each vertex $c_i\in V(G)$ construct a vertex $d_i$, 
and for each vertex $u_j\in V(G)$ construct a vertex 
$w_j$ where $i\in[p]$ and $j\in[q]$.
Add edges between all pairs of vertices $d_i$ and $d_j$ where 
$i\neq j\in [p]$ to construct the clique $C^{\prime}$.
Denote the set of vertices $\{w_1,w_2,\ldots,w_q\}$ by $W$. 
This will be an independent set in $G^\prime$.
\item For each edge $c_iu_j$ in $E(G)$ (where $i\in [p]$ and $j\in [q]$), add an edge $d_iw_j$.
\item Construct $q$ vertices $s_1,s_2,\ldots,s_q$. We will denote the set $\{s_1,s_2,\ldots,s_q\}$ by $S$ and refer to the vertices in $S$ as \emph{special} vertices. Now add an edge from each $s_i$ to each $d_j$ such that $i\in [q]$ and $j \in [p]$. That is, each vertex $s_i$ in $S$ should be adjacent to all the vertices $d_j$ in $C^\prime$, as shown in \cref{TSR_deg_NPH}.
\end{itemize}
\begin{figure}
    \centering
    \includegraphics[scale=0.40]{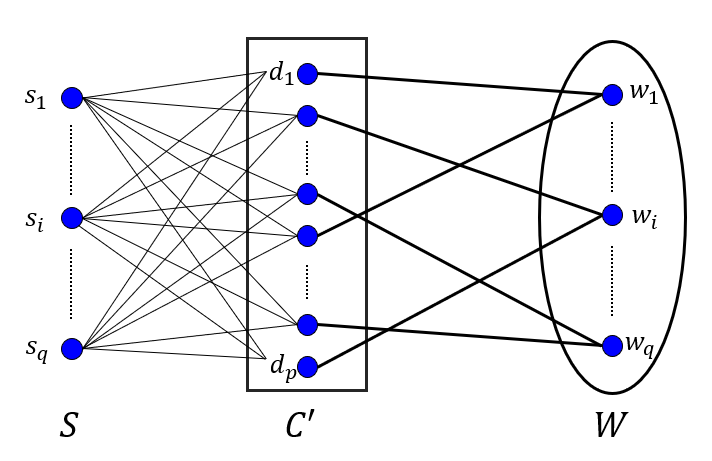}
    \includegraphics[scale=0.3]{./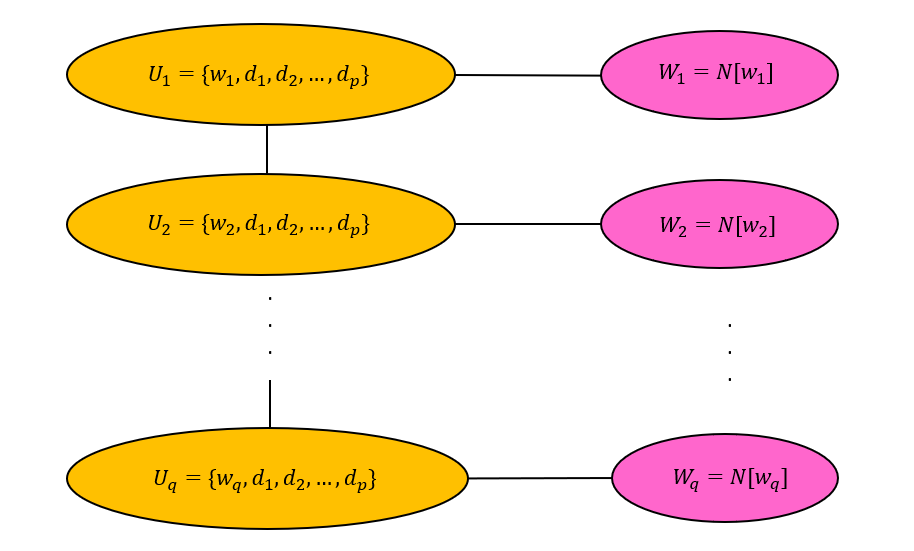}
    \caption{Reduction for \textsc{Token Sliding Reachability} problem parameterized by the maximum clique-tree degree and the corresponding clique-tree.}
    \label{TSR_deg_NPH}
\end{figure}
This completes the description of the reduced instance. 
\begin{lemma}
$G^\prime$ is a chordal graph with a maximum clique-tree degree of
at most $3$.
\end{lemma}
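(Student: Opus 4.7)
The plan is to mimic the clique-tree construction from the previous lemma: explicitly list the maximal cliques of $G'$, define a tree $\calT_c$ on them, and then check both the clique-intersection (subtree) property and the degree bound.

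First I would identify the maximal cliques of $G'$. Since $S$ and $W$ are independent sets and there are no edges between $S$ and $W$, every maximal clique of $G'$ is either of the form $D_i \deff \{s_i\} \cup C'$ for some $i \in [q]$, or of the form $K_j \deff \{w_j\} \cup (N_{G'}(w_j)\cap C')$ for some $j \in [q]$ (if $w_j$ has no neighbour in $C'$, this degenerates to $\{w_j\}$). The $D_i$'s are maximal because each $s_i$ is adjacent to all of $C'$ and to no vertex of $W$; the $K_j$'s are maximal because no $s_i$ is in $N(w_j)$ and no two $w_j$'s are adjacent.

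Next I would define $\calT_c$ as follows: arrange the nodes $D_1,D_2,\ldots,D_q$ on a path (add the edge $D_iD_{i+1}$ for each $i\in[q-1]$) and attach each $K_j$ as a pendant leaf to $D_j$. Then I would verify the clique-intersection property for every vertex of $G'$:
\begin{itemize}
\item Each $s_i$ lies only in $D_i$, and each $w_j$ lies only in $K_j$, so these give trivial (single-node) subtrees.
\item Each $d_i \in C'$ lies in every $D_\ell$ (the entire spine) and in exactly those $K_j$'s with $c_iu_j\in E(G)$; since every such $K_j$ is a pendant off the spine, the set of bags containing $d_i$ is the spine together with some of its pendant leaves, which is connected in $\calT_c$.
\end{itemize}
This shows $\calT_c$ is a valid clique-tree of $G'$, and since a graph is chordal iff it admits a clique-tree, $G'$ is chordal.

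Finally, for the degree bound: an internal spine node $D_i$ ($1<i<q$) has two spine neighbours $D_{i-1},D_{i+1}$ and one pendant $K_i$, giving degree exactly $3$; the endpoints $D_1,D_q$ have degree at most $2$; and each $K_j$ is a leaf of degree $1$. Hence the maximum degree of $\calT_c$ is at most $3$, as claimed. I do not expect any serious obstacle here: the construction is completely analogous to the one used in the previous lemma, and the only mild subtlety is handling the degenerate case where some $w_j$ has no neighbour in $C'$, which is absorbed by allowing $K_j$ to be the singleton $\{w_j\}$ attached as a pendant.
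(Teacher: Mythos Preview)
Your proposal is correct and follows essentially the same approach as the paper: both build the clique tree as a path on the $q$ cliques $\{s_i\}\cup C'$ with each $N[w_j]$ attached as a pendant to the $j$-th spine node, then verify the subtree property vertex by vertex and read off the degree bound. Your write-up is in fact slightly more careful than the paper's (you explicitly justify that the $D_i$'s and $K_j$'s exhaust the maximal cliques and you flag the degenerate isolated-$w_j$ case), but the underlying construction is identical.
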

\begin{proof}
Clearly, $G^\prime$ is a split graph with clique $C^\prime$ 
and independent set $S\cup W$. 
Hence $G^\prime$ is chordal. 
We construct a clique-tree $\calT_c$ for $G^\prime$ as follows: 
Add a vertex $U_i$ in $\calT_c$ corresponding to the clique 
induced by the vertices $\{w_i,d_1,d_2,\ldots,d_p\}$ in $G^\prime$ 
for $i\in [q]$. 
Similarly, add a vertex $W_i$ in $\calT_c$ corresponding 
to the clique induced by the vertices in $N[w_i]$ in $G^\prime$ for $i\in [q]$. 
Add an edge from $U_i$ to $U_{i+1}$ for each $i\in [q-1]$. 
Add an edge from $U_i$ to $W_i$ for each $i\in [q]$. 
Thus, the maximum degree of $\calT_c$ is $3$. 
Each vertex in $V(\calT_c)$ corresponds to a maximal clique in 
$G^\prime$. 
Each vertex in $S\cup W$ appears in just one vertex (bag) of $\calT_c$,
and each vertex in $C^\prime$ appears in all the vertices $U_i$ for 
$i\in [q]$ and possibly some of the vertices $W_j$ for $j\in [q]$, 
which induces a path in $\calT_c$ with possibly some pendant edges.
Thus, for each vertex $v\in V(G^\prime)$, the set of vertices (bags) of 
$\calT_c$ which contain $v$, 
induces a connected sub-tree of $\calT_c$. 
Hence, $G^\prime$ has a clique-tree of maximum degree $3$. 
\end{proof}
\begin{lemma}
$(G,I,J)$ is a \yes\ instance of the \textsc{Token Sliding Reachability} problem
if and only if $(G^\prime, I^\prime, J^\prime)$ is a \yes\ instance of the \textsc{Token Sliding Reachability} problem.
\end{lemma}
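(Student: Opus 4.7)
The plan is to take $I'$ (respectively $J'$) as the natural image of $I$ (respectively $J$) inside $G'[C' \cup W]$ under the bijection $c_i \mapsto d_i$, $u_j \mapsto w_j$. These are $k$-independent sets of $G'$ disjoint from $S$. The forward direction is immediate because this bijection extends to an isomorphism $G \cong G'[C' \cup W]$, so any valid TS sequence from $I$ to $J$ in $G$ lifts verbatim to a valid TS sequence from $I'$ to $J'$ using only vertices of $C' \cup W$.

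For the backward direction, suppose $I' = I'_0, \ldots, I'_t = J'$ is a valid TS sequence in $G'$; the goal is to transform it into one that avoids $S$ entirely. The decisive structural properties of the reduction are: (i) every $s_j \in S$ is adjacent to every vertex of $C'$ and to nothing else, and (ii) $W \cup S$ is an independent set of $G'$. From (i), whenever $s_j \in I'_r$ the clique $C'$ is unoccupied in $I'_r$, and no further slide into any $s_{j'} \in S$ is possible (such a slide would need to originate in $C'$), so $|I'_r \cap S| \le 1$ throughout. Combining (i) and (ii), whenever $s_j \in I'_r$ the only available slide at step $r$ is $s_j \to d_i$ for some $d_i \in C'$ unblocked by the tokens in $I'_r \cap W$: no slide from $W$ into $C'$ is possible because $s_j$ is adjacent to every $d_i$; no slide within $W \cup S$ exists by (ii); and $C'$ itself is empty. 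Symmetrically, the slide that entered $I'_r$ must have been $d_{i'} \to s_j$ for some $d_{i'} \in C'$.

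Hence every visit to $S$ appears as a consecutive two-step block: a slide $d_{i'} \to s_j$ followed immediately by a slide $s_j \to d_i$. Each such block can be replaced by the single slide $d_{i'} \to d_i$ when $i \neq i'$ (valid because $d_{i'} d_i$ is an edge of the clique $C'$, and the resulting configuration equals $I'_{r+1}$, which is already independent), or deleted outright when $i = i'$, since then $I'_{r-1} = I'_{r+1}$. Iteratively compressing all such blocks yields a valid TS sequence from $I'$ to $J'$ contained in $G'[C' \cup W]$, which via the isomorphism descends to the desired TS sequence from $I$ to $J$ in $G$. The main obstacle is the case analysis establishing that occupying a single special vertex immediately forces the next slide to exit $S$ and simultaneously forbids any other $S$-vertex from being occupied; once this is in place, the compression and the independence checks are routine.
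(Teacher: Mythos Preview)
Your proposal is correct and follows essentially the same approach as the paper: both use the isomorphism $G \cong G'[C' \cup W]$ for the forward direction, and for the backward direction both observe that once a special vertex is occupied, no other token can move, so any visit to $S$ can be excised from the sequence. Your write-up is in fact more rigorous than the paper's, spelling out explicitly the induction giving $|I'_r \cap S| \le 1$ and the two-step compression $d_{i'} \to s_j \to d_i \;\rightsquigarrow\; d_{i'} \to d_i$, which the paper leaves implicit.
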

\begin{proof}
The subgraph induced by $C^\prime \cup W$ in $G^\prime$ 
is isomorphic to $G$ where the isomorphism $\phi$ is given by 
$\phi(c_i)=d_i$ for $i\in [p]$ and $\phi(u_j)=w_j$ for $j \in [q]$.
Define $\phi(T)=\bigcup_{v\in T}{\phi(v)}$ for any subset $T$ of $G$. 
Define $I^\prime=\phi(I)$ and $J^\prime=\phi(J)$. 
Since $I$ and $J$ are independent sets in $G$, $I^\prime$ and $J^\prime$ are independent sets in $G^\prime$.

In the forward direction,
it is sufficient to show that if any independent set $I_1$ in $G$ 
can be reconfigured to an independent set $I_2$ in $G$ in 
one valid move of token sliding, then the independent set 
$\phi(I_1)$ in $G^\prime$ can be reconfigured to an 
independent set $\phi(I_2)$ in $G$ in one valid move of token sliding.
If $I_2=I_1\setminus \{u\}\cup \{v\}$, then $\phi(I_2)=\phi(I_1)\setminus \{\phi(u)\}\cup \{\phi(v)\}$. 
Thus $\phi(I_2)$ can be reconfigured from $\phi(I_1)$ by moving 
the token on $\phi(u)$ to $\phi(v)$. 
Since $\phi$ is an isomorphism, this is a valid move.

In the reverse direction, if $I^\prime=J^\prime$, 
then $I=J$ and we trivially have a \yes\ instance.
Therefore, assume $I^\prime \neq J^\prime$ and thus at least 
one token needs to be moved to reconfigure $I^\prime$ to $J^\prime$.
If at any intermediate step, a special vertex contains a token, 
then there cannot be a token in $C^\prime$ and thus no other 
token can move. 
Therefore we will need to move this token to $C^\prime$. 
Thus any sequence of valid token sliding moves in $G^\prime$, 
where the initial and final independent sets do not contain 
a special vertex, can be modified to a sequence of valid token 
sliding moves in $G^\prime$ where no intermediate configuration 
has a token on a special vertex, with the same initial and final independent 
sets.
Hence we can get a sequence of valid token sliding moves from $I^\prime$ 
to $J^\prime$ which places tokens only on vertices of $C^\prime \cup W$. 
Thus, we can get a sequence of independent sets obtained by valid token
sliding moves in $G$ by replacing each intermediate independent set 
$K$ in this sequence by $\phi^{-1}(K)$.
\end{proof}

The above two claims, along with the fact that reduction takes polynomial-time,
imply that \textsc{Token Sliding Reachability} is \NP-\hard\ on chordal graphs with maximum clique-tree degree $3$.


\section{Parameterized by leafage}
\label{sec:leafage-hardness}
\subsection{Hardness for \textsc{Token Sliding Connectivity}}

In this subsection, we prove \Cref{thm:W-hardness-leafage-TS-Conn}.
We present a parameter-preserving reduction that takes as input an 
instance $(G,\langle V_1,V_2,\ldots,V_k\rangle, k)$ of 
\textsc{MultiCol Ind-Set} and returns an instance 
$(G^\prime,nk)$ of \textsc{TS-Connectivity} 
where $G^\prime$ is a chordal graph with leafage $\ell = 2 \cdot k$. 
We find it convenient to describe a tree model $\mathcal{T}$ of the 
chordal graph $G^\prime$.
Recall that in this model,
each vertex of $G^\prime$ corresponds to a specified subtree of 
$\mathcal{T}$, and two vertices of $G^\prime$ are adjacent 
if and only if their corresponding subtrees have a non-empty intersection.

\subparagraph*{Structure of the model tree $\mathcal{T}$, Parking Structure, and Conditional Free Pass }

The model tree $\mathcal{T}$ consists of a central vertex $t_0$, which is the root and has $k+1$ children $t_1,t_2,\ldots,t_k$, and $t_p$.
For each $i\in[k]$, $t_i$ has two children $t_i^a$ and $t_i^b$.
The reduction subdivides each edge $t_i t_i^a$ and $t_i t_i^b$ of the tree $\mathcal{T}$ by adding $2n-1$ new vertices on this edge.
Denote the path between {$t_i^a$} and $t_i^b$ by $T_i$ for each $i \in [k]$.
We use $T_i$ to encode vertices in $V_i$.
It also subdivides $t_0 t_p$ and uses it to park the tokens as
described below.
\begin{itemize}
\item {We subdivide $t_0t_p$ of the tree $\mathcal{T}$ by adding} $nk$ blue vertices and $nk-1$
green vertices as shown in \Cref{structure}.
This allows us to park $nk$ tokens on blue vertices in this `parking structure'.
\item We add a purple vertex $b^\star$ in $G$ whose model contains
${t_0,t_1,t_2,\ldots,t_k}$ and the first vertex in the path from
$t_0$ to $t_p$, i.e., it is a star centered at $t_0$ with $k + 1$ leaves.
\end{itemize}
Vertex $b^{\star}$ will act as a bridge and allow us to take
tokens from any $t_i$ (for some $i\in [k]$) to the parking spot
provided it is possible to move a token to $t_i$.

Next, we move to the conditional free pass.
The idea is to ensure that if there is no token in $T_i$
(which is the condition), then all the tokens in $T_j$
can be moved to $t_j$, then to $b^{\star}$ and eventually
to the parking structure.
For each $i\neq j$ where $i,j\in [k]$, we add a orange vertex that has the following model.
\begin{itemize}
\item In $T_i$, add a line segment from $t_i^a$ to $t_i^b$.
In $T_j$, add a line segment from the first vertex from $t_j$ towards
$t_j^a$ to the first vertex from $t_j$ towards $t_j^b$.
Finally, connect these two line segments by an inverted $\calV$-like
structure with the top of the inverted $\calV$ corresponding to the vertex
$t_0$ as shown in \Cref{structure}.
\end{itemize}

\begin{figure}[t]
    \centering
    \includegraphics[scale=0.4]{./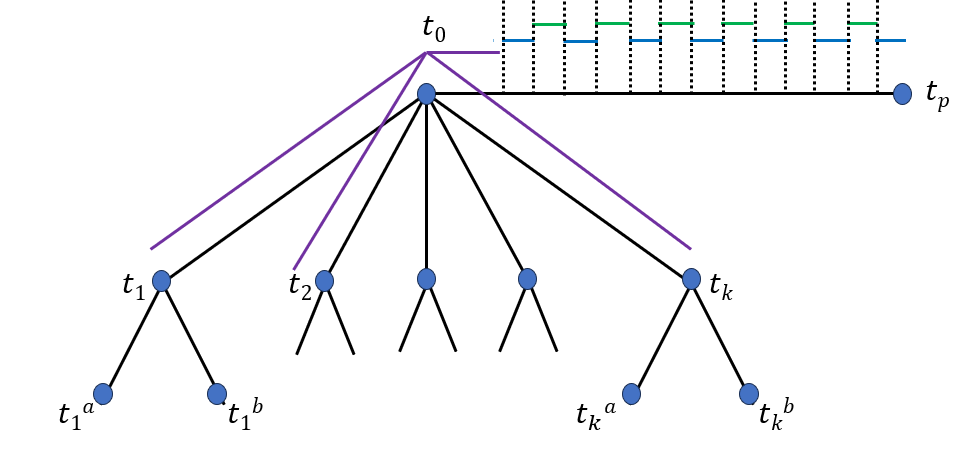}
    \includegraphics[scale=0.5]{./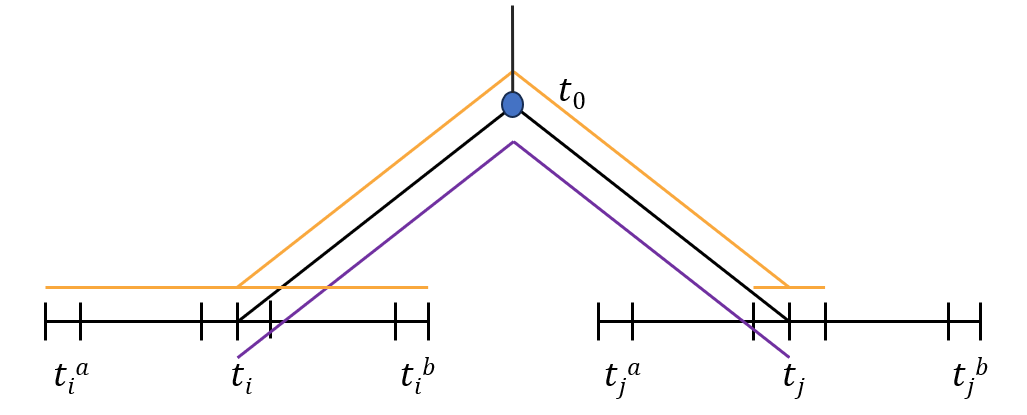}
    \caption{(Top) Structure of the model tree. Node $t_0$ is the central vertex and $t_1, t_2, \dots, t_k$ and $t_p$ are its children.
     The structure between $t_0$ and $t_p$ is to park the tokens.
     (Bottom) The orange vertex denotes the conditional free pass
     between $T_i$ and $T_j$.
     \label{structure}}
     \vspace{-5mm}
\end{figure}

\subparagraph*{Encoding the vertices of $G$:}
We add the following three types of vertices in $G^{\prime}$ to encode
$n$ vertices in $V_i$ for each $i \in [k]$.
Recall that we have subdivided the edge $t_it_i^a$ and the edge $t_it_i^b$
of the tree $\mathcal{T}$ and added $2n-1$ new vertices on each of
these edges.
\begin{itemize}
\item
Add $n$ new vertices $u_i^1,u_i^2,\ldots,u_i^n$ in $G^\prime$
corresponding to the $n$ disjoint intervals on this edge starting
from the vertex $t_i^a$ 
to the first new vertex, from the second new vertex to the third new vertex and so on as shown in
\Cref{Encoding_Vertices_TSC}. Here, the vertex $u_1$ corresponds to the interval between the vertex $t_i^a$ and the first new vertex and so on.
Also, add $n-1$ connector vertices that
connect $u_i^p$ and $u_i^{p+1}$ where $p\in [n]$.
\item
Similarly, add $n$ new vertices $w_i^1,w_i^2,\ldots,w_i^n$ in $G^\prime$
corresponding to the $n$ disjoint intervals on this edge starting
from the first new vertex to the second new vertex and so on such that $w_i^n$
corresponds to the interval from the last new vertex to the vertex $t_i^b$ as shown in \Cref{Encoding_Vertices_TSC}.
As before, add $n-1$  connector vertices for each
$i\in[k]$ which connect $w_i^p$ and $w_i^{p+1}$.
\item As shown in \Cref{Encoding_Vertices_TSC}, add $(n-1)$ pink vertices
for each $t_i$ where $i\in [k]$ as follows:
Each pink vertex $y_i^p$ intersects ${u_i^p,u_i^{p+1},\ldots,u_i^n}$
and ${w_i^1,w_i^2,\ldots,w_i^{p+1}}$ for $p\in [n-1]$.
\end{itemize}

Note that if there are $p$ tokens on the left-hand side of $T_i$
(i.e., between $t_i^a$ and $t_i$) it is safe to assume that they are
on $u_i^1,u_i^2,\ldots,u_i^p$.
If not, then we can move them to $u_i^1,u_i^2,\ldots,u_i^p$ by valid
token sliding operations without disturbing the tokens on
the rest of the graph.
Similarly, if there are $p$ tokens on the right-hand side of $T_i$
(i.e., between $t_i^b$ and $t_i$) it is safe to assume that they are on
$w_i^{n-p+1},w_i^{n-p+2},\ldots,w_i^n$.
The pink vertices are added to ensure the following two
claims hold.
\begin{restatable}{claim}{tsConnClaimNTokens}
\label{cl:ntokens}
For any $i \in [k]$, if there are $n$ tokens on $T_i$, then no token
can be moved to a pink vertex.
\end{restatable}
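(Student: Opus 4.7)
The plan is, for each pink vertex $y_i^p$, to show that any independent set of $n$ tokens on $T_i$ must place at least two tokens inside the closed neighbourhood $N[y_i^p]$. From this the conclusion is immediate: a valid slide onto $y_i^p$ deletes a single token (the source, which itself must be adjacent to $y_i^p$), so after the slide at least one neighbour of $y_i^p$ still carries a token and the resulting configuration fails to be independent. This conclusion is insensitive to where the source lives---even if it comes from outside $T_i$ (for instance from $b^\star$ or an orange vertex) the two $T_i$-tokens already sitting in $N(y_i^p)\cap T_i$ remain untouched and one of them certifies the conflict.

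To set up the count, I would first read off $N(y_i^p)\cap T_i$ directly from the tree model. The subtree associated with $y_i^p$ runs from the leftmost tree-node of the model of $u_i^p$, through $t_i$, to the rightmost tree-node of the model of $w_i^{p+1}$. Hence its $T_i$-neighbours are exactly the slots $u_i^p,\dots,u_i^n$ and $w_i^1,\dots,w_i^{p+1}$, all connectors lying between these slots (including the two \emph{border} connectors, whose models each share a single tree-node with an endpoint of $y_i^p$'s model), and all the other pink vertices $y_i^q$ (every pink subtree contains $t_i$, so they pairwise intersect).

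With this description of the neighbourhood in hand, the complement $T_i\setminus N[y_i^p]$ breaks into two independent path-like pieces: on the left an alternating $u$--connector path on $2p-3$ vertices with independence number $p-1$, and on the right a symmetric $w$--connector path on $2(n-p-1)$ vertices with independence number $n-p-1$. Since no pink vertex lies outside $N[y_i^p]$ and the two pieces are non-adjacent, the independence number of $T_i\setminus N[y_i^p]$ is bounded by $(p-1)+(n-p-1)=n-2$. Pigeonhole then forces at least two of the $n$ tokens into $N[y_i^p]$, which is exactly what the opening paragraph asks for.

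The step I expect to need the most care is verifying that the two border connectors really are adjacent to $y_i^p$: the bound $n-2$ is tight, and losing even one of these border adjacencies would inflate the outside capacity to $n-1$ and collapse the pigeonhole. The one-point intersection between the connector models and the endpoints of $y_i^p$'s model is exactly what prevents this, so when writing the full proof I would spell out this overlap precisely and also record the symmetric remark for $w_i^{p+1}$ on the right.
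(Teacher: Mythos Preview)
Your pigeonhole approach is correct and, in fact, cleaner and more general than the paper's proof. The paper argues by case analysis: it first normalises to a canonical configuration (tokens on $u_i^1,\dots,u_i^p$ and $w_i^{n-p+1},\dots,w_i^n$, appealing to the ``safe to assume'' remark preceding the claim) and then, for each such configuration and each pink vertex $y_i^r$, exhibits an explicit conflict after the slide. That argument is tied to the canonical form and does not directly treat configurations with tokens sitting on connectors or on other pink vertices. Your counting argument handles every independent set of $n$ tokens on $T_i$ uniformly, and also transparently covers slides whose source lies outside $T_i$.

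One remark: your worry about the border connectors is unnecessary. Even if the connector between $u_i^{p-1}$ and $u_i^p$ is \emph{not} adjacent to $y_i^p$, the left piece of $T_i\setminus N[y_i^p]$ becomes a path on $2(p-1)$ vertices rather than $2p-3$, and its independence number is still $\lceil 2(p-1)/2\rceil = p-1$; the right side behaves symmetrically. Hence the bound $(p-1)+(n-p-1)=n-2$ on the outside capacity---and with it the pigeonhole step---survives regardless of how the endpoints of the pink models are drawn, so you can drop that caveat.
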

\begin{claimproof}
We first consider the case when all the tokens are on
${w_i^1,w_i^2,\ldots,w_i^n}$.
If we move the token on some $w_i^q$ to some pink vertex $y_i^r$
where $r\neq q$, it will be adjacent to the token on the vertex $w_i^r$.
If we move the token on some $w_i^q$ to $y_i^q$ then it
will be adjacent to the token on some $w_i^r$ where $r\neq q$.
Hence, in this case, no token can be moved to a pink vertex.
Using similar arguments, if all tokens are
on ${u_i^1,u_i^2,\ldots,u_i^n}$, then no token can be moved to a pink vertex.

Consider the case when tokens are on ${u_i^1,u_i^2,\ldots,u_i^p}$
and on ${w_i^{n-p+1},w_i^{n-p+2},\ldots,w_i^n}$
for some $0<p<n$.
Suppose we move the token on $u_i^1$ to the pink vertex $y_i^1$,
this token will be adjacent to the token on $u_i^p$ if $p\geq2$
otherwise it will be adjacent to the token on $w_i^2$.
If we move the token on $u_i^q$ where $2\leq q \leq p$
to a pink vertex $y_i^r$ where $r\leq q$,
then it will be adjacent to $u_i^1$.
If we move the token on $w_i^q$ to a pink vertex $y_i^r$
such that $r\geq q\geq p+2$, then it will be adjacent to
the token on $w_i^{p+1}$.
If $p\neq n-1$ and we move the token on $w_i^{p+1}$
to a pink vertex $y_i^r$ such that $r\neq p+1$,
then it will be adjacent to the token on $w_i^n$.
If we move the token on $w_i^{p+1}$ to the pink vertex
$y_i^{p+1}$, then it will be adjacent to the token on the vertex $u_i^p$.
Thus, no token from $T_i$ can be moved to a pink vertex via a valid token sliding move.
\end{claimproof}
\begin{restatable}{claim}{tsConnNrTokenEachSide}
\label{cl:lessthann}
For any $i \in [k]$, if there are fewer than $n$ tokens on $T_i$, then
the pink vertices intersecting $T_i$ can be used to move tokens to $b^{\star}$ (and then to the parking structure).
\end{restatable}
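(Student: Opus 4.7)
Fix $i \in [k]$ and suppose $T_i$ contains $p$ tokens on its left half (sitting on $u_i^1,\ldots,u_i^p$) and $q$ tokens on its right half (sitting on $w_i^{n-q+1},\ldots,w_i^n$), where as noted in the paragraph before Claim~\ref{cl:ntokens} we may assume they are parked in these canonical positions. The hypothesis gives $p+q<n$, which will be the only quantitative fact we use. The plan is to single out one pink vertex that is not blocked by any token currently in $T_i$, slide an adjacent token onto it, and then slide it out via $b^\star$ to the parking structure. Iterating this procedure evacuates all tokens from $T_i$.

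\textbf{Choosing the pink vertex.} Recall that $y_i^r$'s model intersects $u_i^r,u_i^{r+1},\ldots,u_i^n$ together with $w_i^1,w_i^2,\ldots,w_i^{r+1}$. If $p \geq 1$, I would set $r\deff p$ and plan to evict the token on $u_i^p$. The index $r=p$ is in $[n-1]$ because $p+q<n$ and $q\geq 0$ force $p\leq n-1$. The only left token adjacent to $y_i^p$ is $u_i^p$ itself, since the other left tokens lie on $u_i^1,\ldots,u_i^{p-1}$ which are disjoint from the model of $y_i^p$. For the right tokens, adjacency with $y_i^p$ would require $n-q+1\leq p+1$, i.e.\ $p+q\geq n$, which contradicts the hypothesis. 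If instead $p=0$ (and necessarily $q\geq 1$), I would take $r\deff n-q$; a symmetric calculation shows that $w_i^{n-q+1}$ is then the unique $T_i$-token adjacent to $y_i^{n-q}$. In either case there is a pink vertex $y_i^r$ with exactly one adjacent token on $T_i$.

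\textbf{Executing the slides.} Now I would perform the following sequence. (i)~Slide the distinguished $T_i$-token onto $y_i^r$; this is legal by the adjacency check above, and the only other neighbours of $y_i^r$ in $G'$ are $b^\star$ (both models contain $t_i$), the orange vertices incident to $T_i$, and the connector vertices on $T_i$, none of which ever need to carry tokens in the reconfigurations used by the reduction (the orange vertices only act as conditional free passes and, by design, their role is analogous to $b^\star$; a token parked on them is never part of the target configurations). (ii)~Slide the token from $y_i^r$ to $b^\star$; since $y_i^r$ and $b^\star$ share $t_i$, the move is an edge. (iii)~Slide $b^\star$'s token onto the first blue parking vertex and then along the $t_0$--$t_p$ chain to the next available blue spot, using the green intermediate vertices already built into the parking structure.

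\textbf{Iteration and main obstacle.} After one such extraction the remaining tokens on $T_i$ are $p-1$ (resp.\ $q-1$) on the left (resp.\ right), so the inequality $p+q<n$ is preserved and the argument can be repeated until $T_i$ is empty. The main obstacle that the write-up must address is showing that steps (ii) and (iii) are never blocked: concretely, one must verify that whenever this lemma is invoked from the main reduction, $b^\star$ is unoccupied, no orange vertex carries a token, and the parking chain has a free blue slot beyond the currently occupied prefix. These facts should follow from an invariant maintained throughout the reconfiguration (namely, that an honest configuration always keeps $b^\star$ and the orange/green auxiliary vertices free, and keeps the parked tokens consolidated at one end of the blue chain); the claim itself only needs to present the local sliding subroutine, with the global invariant supplied by the enclosing proof.
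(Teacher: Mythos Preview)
Your argument is correct and mirrors the paper's proof almost exactly: both evacuate the token on $u_i^p$ via the pink vertex $y_i^p$ (with the symmetric choice when $p=0$), slide through $b^{\star}$ to the parking structure, and iterate, while deferring the global availability of $b^{\star}$ to the surrounding argument. One small inaccuracy worth fixing: your enumeration of the ``only other neighbours of $y_i^r$'' omits the other pink vertices $y_i^{r'}$ and the $\calH$-type vertices passing through $T_i$ (all of whose models contain $t_i$), but since these also carry no tokens in the configurations under consideration, the conclusion stands.
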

\begin{claimproof}
As before, we remark that the claim only implies the condition on the
usability of the pink vertices intersecting $T_i$.
It might be possible that $b^{\star}$ is adjacent to some other
token and hence the tokens in $T_i$ cannot be moved to it.
However, for the sake of clarity, we assume that this is not the case
in the rest of the proof.

Without loss of generality, we assume that there are tokens on
$u_i^1,u_i^2,\ldots,u_i^p$ and $v_i^{n-q+1},v_i^{n-q+2},\ldots,v_i^n$
where $p+q<n$.
Move the token on $u_i^p$ to $y_i^p$ which is not adjacent
to any other token.
Now move this token to the purple vertex $b^{\star}$
and then to the last vertex of the parking space.
Similarly move each $u_j$ (where $j<p$ and $j$ is the index
of the largest remaining token between $t_i^a$ and $t_i$) to $b^{\star}$.
Then move it to the last empty blue vertex.
Thus all the tokens on $u_i^1,u_i^2,\ldots,u_i^p$ can be moved
to $b^{\star}$.
Similarly all the tokens on $v_i^{n-q+1},v_i^{n-q+2},\ldots,v_i^n$
can be moved to the parking space via $b^{\star}$ by starting
from the token on the vertex with the smallest index.
\end{claimproof}

\begin{figure}[t]
    \centering
    \includegraphics[scale=0.5]{./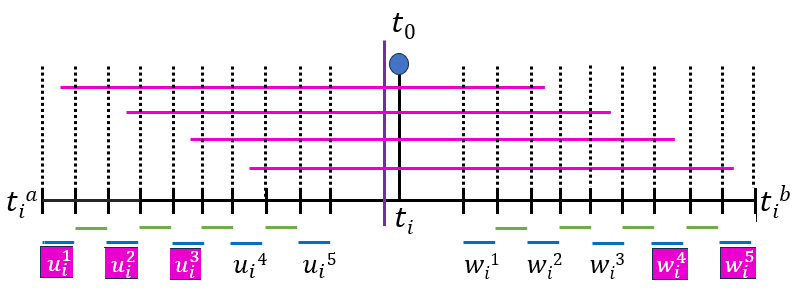}
    \caption{Add $(n-1)$ pink vertices for each $t_i$ where $i\in[k]$ as follows: Each pink vertex $y_i^p$ intersects $\{u_i^p,u_i^{p+1},\ldots,u_i^n\}$ and $\{w_i^1,w_i^2,\ldots,w_i^{p+1}\}$. In the figure $n$ is taken to be $5$.
    No token from $T_i$ can be moved to a pink vertex.}
    \label{Encoding_Vertices_TSC}
    \vspace{-5mm}
\end{figure}

\subparagraph*{Encoding the edges of $G$:}
Whenever there is an edge from the $p^{th}$ vertex in $V_i$ to the $q^{th}$ vertex in $V_j$ where $p,q\in[n]$ and $i,j\in [k]$, add a vertex in $G^\prime$ as described below. Refer \Cref{Encoding_edges_TSC} for an illustration.
\begin{itemize}
\item Add a horizontal line segment from the endpoint of the interval corresponding to $u_i^p$ to the endpoint of the interval corresponding to $w_i^p$ (both inclusive). Add another horizontal line segment from the endpoint of the interval corresponding to $u_j^{q+1}$ to the starting point of the interval corresponding to $w_j^q$ (both inclusive). Join these two horizontal line segments by a vertical line segment.
\item Similarly, add a horizontal line segment from the endpoint of the interval corresponding to $u_j^q$ to the starting point of the interval corresponding to $w_j^q$ (both inclusive). Add another horizontal line segment from the endpoint of the interval corresponding to $u_i^{p+1}$ to the starting point of the interval corresponding to $w_i^p$. Join these two horizontal line segments by a vertical line segment.
\end{itemize}
We denote the vertices of $G^\prime$ which are added corresponding to the edges of $G$ by $\calH$-type vertices as we add two structures similar to letter $H$.

If there are $n$ tokens in $T_i$ for some $i$, none of the tokens in $T_i$
can be moved to the parking structure using the pink vertices as shown in 
 \Cref{cl:ntokens}.
Thus in this case we will need to use the $\calH$-type vertices to move at least one of these tokens to the parking space.

\begin{restatable}{claim}{tsConnclEdge}
\label{cl:edge}
Consider an edge $e$ of $G$ from the $p^{th}$ vertex in $V_i$ 
to the $q^{th}$ vertex in $V_j$ where $p,q\in [n]$ and $i,j\in [k]$.
Suppose $n$ tokens in $T_i$ are placed on $\{u_i^1,u_i^2,\ldots,u_i^p\}$ 
and $\{w_i^{p+1},w_i^{p+2},\ldots,w_i^n\}$. 
A token in $T_i$ can be moved to $b^{\star}$ and then to the parking 
space using the $\calH$-type structure corresponding to $e$ 
if and only if $T_j$ contains $n$ tokens placed on 
$\{u_j^1,u_j^2,\ldots,u_j^q\}$ and 
$\{w_j^{q+1},w_j^{q+2},\ldots,w_j^n\}$.
\end{restatable}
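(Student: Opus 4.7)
My plan is to determine the exact $u$- and $w$-neighbours of the two $\calH$-vertices $h_1,h_2$ corresponding to $e$, and then read off the two implications of the biconditional from the resulting sliding constraints. From the tree-model description, $h_1$'s $T_i$-neighbours are $\{u_i^p, u_i^{p+1}, \dots, u_i^n\} \cup \{w_i^1, \dots, w_i^p\}$ and its $T_j$-neighbours are $\{u_j^{q+1}, \dots, u_j^n\} \cup \{w_j^1, \dots, w_j^q\}$; while $h_2$'s $T_i$-neighbours are $\{u_i^{p+1}, \dots, u_i^n\} \cup \{w_i^1, \dots, w_i^p\}$ and its $T_j$-neighbours are $\{u_j^q, u_j^{q+1}, \dots, u_j^n\} \cup \{w_j^1, \dots, w_j^q\}$. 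Since the vertical segments of both $h_1$ and $h_2$ pass through $t_0$, each is adjacent to $b^\star$ in $G'$. Under the hypothesised $T_i$-pattern, the only $T_i$-token appearing in either of these neighbourhoods is $u_i^p$, and it is adjacent to $h_1$ only; hence any valid single-slide move of a $T_i$-token onto the gadget must take $u_i^p$ onto $h_1$.

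For the backward direction, assume $T_j$'s $n$ tokens sit as claimed. Then the $T_j$-neighbours of $h_1$ listed above are token-free, and by the $T_i$-pattern so are its remaining $T_i$-neighbours; thus sliding $u_i^p\to h_1$ is legal. Next, $h_1\to b^\star$ is legal (they share the tree node $t_0$), and from $b^\star$ the token proceeds along the parking backbone to the first empty blue vertex. For the forward direction, if the gadget is used then (by the preceding observation) the initiating slide is $u_i^p\to h_1$, which forces $\{u_j^{q+1}, \dots, u_j^n\} \cup \{w_j^1, \dots, w_j^q\}$ to be token-free. By the canonical-form convention stated just before the claim, the tokens of $T_j$ occupy $\{u_j^1, \dots, u_j^{q'}\} \cup \{w_j^{q'+1}, \dots, w_j^n\}$ for some split $q'$ (and exactly $n$ tokens in total, as maintained throughout the main reduction); the freedom of the $u$-block forces $q' \le q$, and the freedom of the $w$-block forces $q' \ge q$, giving $q'=q$ and the claimed configuration.

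The principal obstacle is to justify that no indirect route through $h_2$ is available. No $T_i$-token is adjacent to $h_2$ under the given pattern, and any rearrangement within $T_i$ that might bring one there is precluded by \Cref{cl:ntokens}, since $T_i$ carries $n$ tokens and therefore no pink vertex of $T_i$ can host a token. Moreover, on the $T_j$ side, $u_j^q$ lies in $N(h_2)$ and is occupied in every canonical $n$-token configuration of $T_j$ consistent with $h_1$'s $T_j$-neighbours being free (namely, the one with $q'=q$ identified above), so $h_2$ remains blocked throughout. Consequently the only use of the $\calH$-gadget is via $h_1$, and the equivalence follows.
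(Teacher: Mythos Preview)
Your argument is correct and follows essentially the same route as the paper's proof: compute the $T_i$- and $T_j$-neighbours of the $\calH$-vertex, observe that under the hypothesised $T_i$-pattern the only adjacent $T_i$-token is $u_i^p$, and then check that this slide is legal precisely when the $T_j$-split equals $q$. The paper's proof is terser and does not explicitly separate the two $\calH$-vertices $h_1,h_2$; your added paragraph ruling out $h_2$ (no $T_i$-token is adjacent to it under the given pattern, and \Cref{cl:ntokens} forbids rearranging via pink vertices) makes explicit something the paper leaves implicit, but the underlying reasoning is the same.
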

\begin{claimproof}
Suppose the $n$ tokens in $T_j$ are on $\{u_j^1,u_j^2,\ldots,u_j^q\}$ 
and $\{w_j^{q+1},w_j^{q+2},\ldots,w_j^n\}$. 
None of these are adjacent to the $\calH$-type structure 
corresponding to $e$. 
The only token on $T_i$ which is adjacent to the 
$\calH$-type structure corresponding to $e$ is the 
token on $u_i^p$, hence it can move to the 
$\calH$-type structure without violating the independence.

Now we show the converse. Without loss of generality, 
suppose the $n$ tokens in $T_j$ are on 
$\{u_j^1,u_j^2,\ldots,u_j^r\}$ and $\{w_j^{r+1},w_j^{r+2},\ldots,w_j^n\}$ 
where $r\neq q$. 
If $r>q$, then the token on $u_j^r$ will 
be adjacent to the $\calH$- type structure. 
Similarly if $r<q$, the token on $w_j^q$ 
will be adjacent to the $\calH$-type structure. 
Thus, no token in $T_i$ can be moved to the $\calH$-type structure.
\end{claimproof}

\begin{figure}
    \centering
    \includegraphics[scale=.55]{./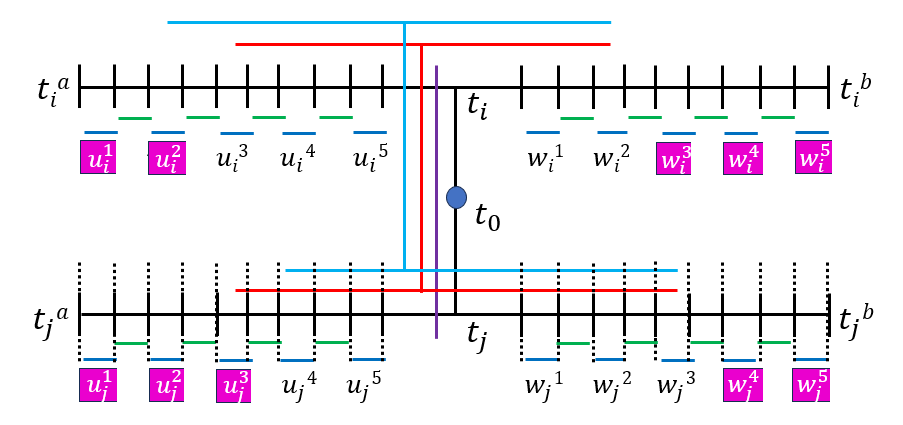}
    \caption{For $n = 5$, if there exists an edge between the $2^{nd}$
    vertex of $V_i$ and $3^{rd}$ vertex of $V_j$, 
    this is the $\calH$-type structure corresponding to this edge.
    This is the $\calH$-type structure when $n=5$ and 
    there is an edge between $2^{nd}$ vertex of $V_i$ and 
    the $3^{rd}$ vertex of $V_j$. 
    Notice that one of the tokens on $u_i^{2}$ (or $u_j^{3}$) 
    can move to $t_0$ using the cyan (or the red) vertex.}
\label{Encoding_edges_TSC} 
\vspace{-5mm}
\end{figure}

This completes the description of the reduction and the necessarily claims.
Next, we show that $(G,\langle V_1,V_2,\ldots,V_k\rangle, k)$
is a \yes\ instance of \textsc{MultiCol Ind-Set} if and only if
the instance $(G^\prime,nk)$ is a \no\ instance of 
\textsc{TS-Connectivity}. 
Let the independent set in $G^\prime$ which consists of the 
$nk$ vertices in the parking structure be denoted by $I^\star$.
We use the fact that $(G^\prime,nk)$ is a \yes\ instance of 
\textsc{TS-Connectivity} if and only if any 
independent set $I$ of size $nk$ in $G^\prime$ can be 
modified to $I^\star$ via a sequence of valid token sliding operations.

\begin{restatable}{lemma}{tsConnforward}
\label{lemma:forward-reduction-leafage}
If $(G,\langle V_1,V_2,\ldots,V_k\rangle, k)$ is a \yes\ instance of \textsc{MultiCol Ind-Set},
then $(G^\prime,nk)$ is a \no\ instance of \textsc{TS-Connectivity}.
\end{restatable}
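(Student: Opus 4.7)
The plan is to construct an explicit witness independent set $I$ of size $nk$ in $G'$ from the multicoloured independent set $\{v_{i, p_i} : i \in [k]\}$ and show that $I$ lies in a connected component of $TS_{nk}(G')$ distinct from the one containing the all-parking configuration $I^\star$ (where all $nk$ tokens occupy the blue vertices of the parking structure). Set
\[
I \;:=\; \bigcup_{i=1}^{k} \bigl(\{u_i^1, \ldots, u_i^{p_i}\} \cup \{w_i^{p_i+1}, \ldots, w_i^{n}\}\bigr),
\]
so that each $T_i$ carries exactly $n$ tokens in the canonical pattern encoding $v_{i, p_i}$ and every bridge, pink, orange, $\mathcal{H}$-type, and parking vertex is empty. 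A direct check confirms $|I| = nk$ and that $I$ is independent: within each $T_i$ the chosen subtrees are pairwise-disjoint intervals, distinct $T_i$'s are vertex-disjoint in the model tree $\mathcal{T}$, and no other vertex of $G'$ carries a token.

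The central claim is an invariant, maintained along any reconfiguration sequence starting from $I$: each $T_i$ still holds exactly $n$ tokens that can be canonicalised (by internal slides along connectors) to the pattern $\{u_i^1, \ldots, u_i^{p_i}\} \cup \{w_i^{p_i+1}, \ldots, w_i^{n}\}$, and no token occupies any bridge, pink, orange, $\mathcal{H}$-type, or parking vertex. Granting the invariant, the parking structure remains empty throughout, so $I$ cannot be reconfigured to $I^\star$, certifying that $(G', nk)$ is a \no\ instance of \textsc{TS-Connectivity}.

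I would establish the invariant by induction on the move count. There are three conceivable ways for a token to leave $\bigcup_i V(T_i)$: onto a pink, an orange, or an $\mathcal{H}$-type vertex. \Cref{cl:ntokens} rules out pink-vertex exits, since each $T_i$ carries $n$ tokens. The orange vertex attached to the pair $(i, j)$ has its model covering the entire path $T_i$ from $t_i^a$ to $t_i^b$, hence it is adjacent to every token currently in $T_i$ and cannot receive a token. Finally, the $\mathcal{H}$-type vertex for the edge $e = (v_{i,p}, v_{j,q})$ of $G$ can be stepped onto, by \Cref{cl:edge}, only when $T_i$ encodes $p$ and $T_j$ encodes $q$ simultaneously; the induction hypothesis then forces $(p, q) = (p_i, p_j)$, contradicting the assumption that $\{v_{i, p_i} : i \in [k]\}$ is an independent set of $G$.

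The main obstacle I expect to handle carefully is verifying that the encoded tuple $(p_1, \ldots, p_k)$ cannot drift under purely internal slides—otherwise the invariant could fail silently by producing a tuple that encodes an edge of $G$. Any change in $p_i$ requires transporting a token across $T_i$ between its $u$-side and its $w$-side; since no connector of $T_i$ bridges these two sides, such a transfer must pass through a pink, orange, or $\mathcal{H}$-type vertex, all of which are already ruled out. Slides onto connectors of $T_i$ merely shuffle tokens within a single side and preserve the encoded $p_i$, so the canonical encoding is frozen. This closes the induction and completes the proof.
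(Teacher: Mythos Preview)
Your proof is correct and follows essentially the same approach as the paper: construct $I$ from the multicoloured independent set, then argue that no token can leave $\bigcup_i T_i$ by ruling out pink vertices via \Cref{cl:ntokens}, orange vertices because each covers an entire $T_i$, and $\calH$-type vertices via \Cref{cl:edge} together with the independence of $\{v_{i,p_i}\}$. Your explicit invariant and the observation that no connector bridges the $u$-side and $w$-side (so the encoded $p_i$ cannot drift) make the argument more rigorous than the paper's version, but the underlying idea is the same.
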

\begin{proof}
Supose $X=\{v_1^{p_1},v_2^{p_2},\ldots,v_k^{p_k}\}$ be a solution of 
\textsc{MultiCol Ind-Set} where the $p_i^{th}$ vertex from 
partition $V_i$ is in the solution for every $i\in [k]$. 
We construct an independent set $I$ in $G^\prime$ of size 
$nk$ as follows:
For each $i \in [k]$, place $n$ tokens in each $T_i$ on the vertices 
$u_i^1,u_i^2,\ldots,u_i^{p_i}$ and $w_i^{p_i+1},w_i^{p_i+2},\ldots,w_i^n$. 
We show that $I^\star$ is not reachable from $I$ via a set 
of valid token sliding moves.
Since there are $n$ tokens in $T_i$, pink vertices cannot 
be used to move a token in $T_i$ to the parking structure by 
\Cref{cl:ntokens}.
Moreover, by the construction, the orange vertex intersecting both 
$T_i$ and $T_j$ can be used to move tokens in $T_j$ to $b^{\star}$
(and then to the parking structure) if and only if 
$T_i$ does not contain any token.
And hence, the orange vertex also cannot be used 
to move a token in $T_i$ to the parking structure.
Hence it will be possible to move a token to the parking structure 
only using the $\calH$-type structure. 
Using \Cref{cl:edge}, when there are $n$ tokens each in both 
$T_i$ and $T_j$ such that the tokens in $T_i$ are at 
$\{u_i^1,u_i^2,\ldots,u_i^p\}$ and 
$\{w_i^{p+1},w_i^{p+2},\ldots,w_i^n\}$ and 
the tokens in $T_j$ are at $\{u_j^1,u_j^2,\ldots,u_j^p\}$ and 
$\{w_j^{p+1},w_j^{p+2},\ldots,w_j^n\}$, 
the tokens can be moved to the parking structure using 
the $\calH$-type structure corresponding to the edge between $p^{th}$ 
vertex in $V_i$ and $q^{th}$ vertex in $V_j$. 
This is not possible because $X$ is an independent set.
Hence, no token in $I$ can be moved to the parking space and thus 
$I^\star$ cannot be reached from $I$.
\end{proof}

\begin{lemma}
\label{lemma:backward-reduction-leafage}
If $(G,\langle V_1,V_2,\ldots,V_k\rangle, k)$ is a \no\ instance of \textsc{MultiCol Ind-Set},
then $(G^\prime,nk)$ is a \yes\ instance of \textsc{TS-Connectivity}.
\end{lemma}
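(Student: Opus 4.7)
The plan is to prove the statement by showing that from any $nk$-independent set $I$ of $G^\prime$ one can reach the fixed canonical independent set $I^\star$ (whose $nk$ tokens sit on the blue parking vertices) via a sequence of valid token sliding moves. Since $I^\star$ is common to every configuration under consideration, this reachability immediately yields connectivity of $TS_{nk}(G^\prime)$, hence a \yes-instance.

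I would proceed by a monovariant argument. Let $\mu(I)$ denote the number of tokens of $I$ that do not lie on a blue parking vertex, and show that whenever $\mu(I)>0$ there is a valid sequence of slides that strictly decreases $\mu$. Before running the monovariant, I would \emph{canonicalize} $I$: any token sitting on an auxiliary vertex (pink, orange, $\calH$-type, the purple bridge $b^\star$, or a green parking vertex) should be pushed either onto a blue parking vertex or into one of the $T_i$'s, and the tokens in each $T_i$ should then be compressed to prefix/suffix form $\{u_i^1,\ldots,u_i^{p_i}\}\cup\{w_i^{p_i+1},\ldots,w_i^n\}$ for some $p_i$. These rearrangements only use edges local to the gadget containing the moving token, so the other tokens are not disturbed.

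Once the configuration is canonical, I would iterate the following step. If some $T_i$ contains fewer than $n$ tokens, Claim~\ref{cl:lessthann} supplies a valid slide through a pink vertex, then through $b^\star$, and into the parking structure, which decreases $\mu$. Otherwise every $T_i$ contains exactly $n$ tokens, forcing all $nk$ tokens to lie in the $T_i$'s with the parking path empty; each $T_i$ now encodes a vertex $v_i^{p_i}\in V_i$. Since $(G,\langle V_1,\ldots,V_k\rangle,k)$ is a \no-instance of \textsc{MultiCol Ind-Set}, the set $\{v_1^{p_1},\ldots,v_k^{p_k}\}$ fails to be independent in $G$, so some edge $e\in E(G)$ joins $v_i^{p_i}$ to $v_j^{p_j}$. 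By Claim~\ref{cl:edge} the $\calH$-gadget of $e$ is then usable and carries a token from $T_i$ to $b^\star$ and onward into the parking structure; this drops us back into the previous case with $\mu$ decreased by one. Iterating until $\mu=0$ gives $I=I^\star$.

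The main obstacle will be technical: making the canonicalization precise and verifying that $b^\star$ is available whenever we need to pipe a token through it. In any canonical configuration the only neighbours of $b^\star$ inside $T_j$ lie at the middle of $T_j$ (around $t_j$), which is unoccupied, and every orange vertex that would otherwise block $b^\star$ through $t_0$ has already been emptied during canonicalization. To keep the first parking vertex itself free, I would fill the parking slots in far-to-near order (starting from $t_p$ and working back toward $b^\star$), so that each freshly retired token can slide all the way along the green--blue path to the next empty blue slot without ever lingering next to $b^\star$. With these bookkeeping choices the monovariant step always applies and the procedure terminates at $I^\star$, establishing the claim.
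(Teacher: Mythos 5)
Your overall strategy (reach the canonical parked configuration $I^\star$ from every $nk$-independent set, splitting into the case where some $T_i$ holds fewer than $n$ tokens and the case where every $T_i$ holds exactly $n$) is the same as the paper's, but your monovariant step has a genuine gap: you never use the orange ``conditional free pass'' vertices to evacuate a track that still holds $n$ (or more) tokens, and without them the iteration gets stuck. Concretely, suppose every $T_i$ holds exactly $n$ tokens; you correctly find an edge via the \no-instance assumption, use the $\calH$-gadget of \Cref{cl:edge} to extract one token from some $T_i$, and then drain the rest of $T_i$ through its pink vertices via \Cref{cl:lessthann}. But once $T_i$ is empty, each remaining $T_j$ still carries exactly $n$ tokens: its pink vertices are unusable by \Cref{cl:ntokens}, and an $\calH$-gadget between $T_j$ and $T_{j'}$ is usable only if there happens to be an edge of $G$ between the two encoded vertices --- which the \no-instance hypothesis does not guarantee for a $(k-1)$-tuple (it only rules out independent \emph{full} $k$-tuples). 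So neither branch of your monovariant applies, $\mu>0$, and the argument halts. The same failure occurs at the very start if some $T_j$ holds more than $n$ tokens while another holds fewer than $n$: you empty the deficient one and are then stuck with a surplus track whose pink and $\calH$ routes are both blocked.

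The paper closes exactly this hole with the orange vertices: the conditional free pass between $T_i$ and $T_j$ is usable to move tokens out of $T_j$ (to $b^\star$ and onward to the parking path) precisely when $T_i$ contains no token, \emph{regardless} of how many tokens $T_j$ holds. So after one track is emptied (by pink vertices, or by one $\calH$-gadget move followed by pink vertices), all other tracks are emptied through their orange passes. You mention the orange vertices only during canonicalization and when checking that $b^\star$ is free, but they must appear as a third move type in the main loop; adding that step (and observing that an emptied $T_i$ stays empty while the orange passes are exercised) repairs the proof and essentially reproduces the paper's argument.
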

\begin{proof}
Suppose $(G,\langle V_1,V_2,\ldots,V_k\rangle, k)$ is a \no\ 
instance of \textsc{MultiCol Ind-Set}. 
We show that any independent set $I$ in $G^\prime$
of size $nk$ can be transformed to $I^\star$ via a 
sequence of valid sequence of token sliding operations. 

Note that a token on $b^{\star}$ can be moved to a parking structure.
Also, any token on any of the $\calH$-type structured vertex, or pink
or orange vertices can either be moved to $b^{\star}$ if possible or
can be moved to the blue vertices on $T_i$ 
(and this will only free up space for other tokens to move). 
Hence without loss of generality we can assume that all the tokens in $I$ are 
on $T_i$ for some $i\in[k]$ or in the parking structure. 
If there are some tokens in the parking structure, 
we will move them as close to $t_p$ as possible.
We now consider the following two mutually disjoint 
and exhaustive cases.

\emph{Case $I$:} There exists $i\in [k]$ such that $T_i$ contains 
at most $(n-1)$ tokens. 
This can happen either because some tokens are already in the parking 
structure or there exists $j\in [k]$ such that $T_j$ 
contains at least $n+1$ tokens. 
In this case, move all the tokens in $T_i$ to the parking structure 
using the pink vertices (as seen in \Cref{cl:lessthann}). 
After this, move all the vertices in each $T_j$ such that 
$j\neq i$ to the parking structure using the orange vertices.
Note that by the construction, the orange vertex intersecting both 
$T_i$ and $T_j$ can be used to move tokens in $T_j$ to $b^{\star}$
(and then to the parking structure) if and only if 
$T_i$ does not contain any token.
Thus we can reach from $I$ to $I^\star$.

\emph{Case $II$:} For every $i\in [k]$, 
there are exactly $n$ tokens in each $T_i$. 
Without loss of generality, we can assume that these 
tokens are on $\{u_i^1,u_i^2,\ldots,u_i^{p_i}\}$ and 
$\{w_i^{p_i+1},w_i^{p_i+2},\ldots,w_i^n\}$ for some $p_i\in[n]$.
Consider set formed by taking the $p_i^{th}$ vertex $v_i^{p_i}$
in the $i^{th}$ partition in $G$ for $i\in [k]$. 
This forms a $k$-sized multicolored subset of $V(G)$ 
and thus it cannot be an independent set as we 
have a \no\ instance of \textsc{MultiCol Ind-Set}.
Thus there must be an edge between two of the vertices in this set.
We will use the $\calH$-type structure corresponding to 
this edge in $G^\prime$ as seen in \Cref{cl:edge} 
to move a token in some $T_i$ 
(where the edge is incident on a vertex from the $i^{th}$ partition in $G$) 
to the parking structure. 
Now there are $n-1$ tokens in $T_i$ and we proceed similar to 
Case $I$ and thus we can reconfigure $I$ to $I^\star$ 
via a sequence of valid token sliding moves.
\end{proof}
The proof of \Cref{thm:W-hardness-leafage-TS-Conn} follows
from \Cref{lemma:forward-reduction-leafage},
\Cref{lemma:backward-reduction-leafage} and the facts
that the reduction takes polynomial time and the leafage of
the resulting chordal graph is $2k + 1$.

\subsection{Hardness of Token Sliding Reachability}
In this subsection, we prove \Cref{thm:W-hardness-leafage-TS-Reach}.
We present a parameter-preserving reduction
that takes an instance
$(G,\langle V_1,V_2,\ldots,V_k\rangle, k)$ of the
\textsc{MultiColored Clique} problem as an input and returns an instance
$(G^\prime, I,J)$ of the \textsc{TS-Reachability} problem.
Without loss of generality, we assume that each $V_i$
has at least one edge incident to it.
As before, we find it convenient to describe a chordal graph
$G^\prime$ as a tree model $\mathcal{T}$.
Recall that each vertex of $G^\prime$ is associated
with a specific subtree of
$\mathcal{T}$ and two vertices of $G^\prime$ share
an edge if and only if their corresponding subtrees have a non-empty intersection.
We describe the reduction 
after an informal overview.

\begin{figure}[t]
\centering
\includegraphics[scale=0.5]{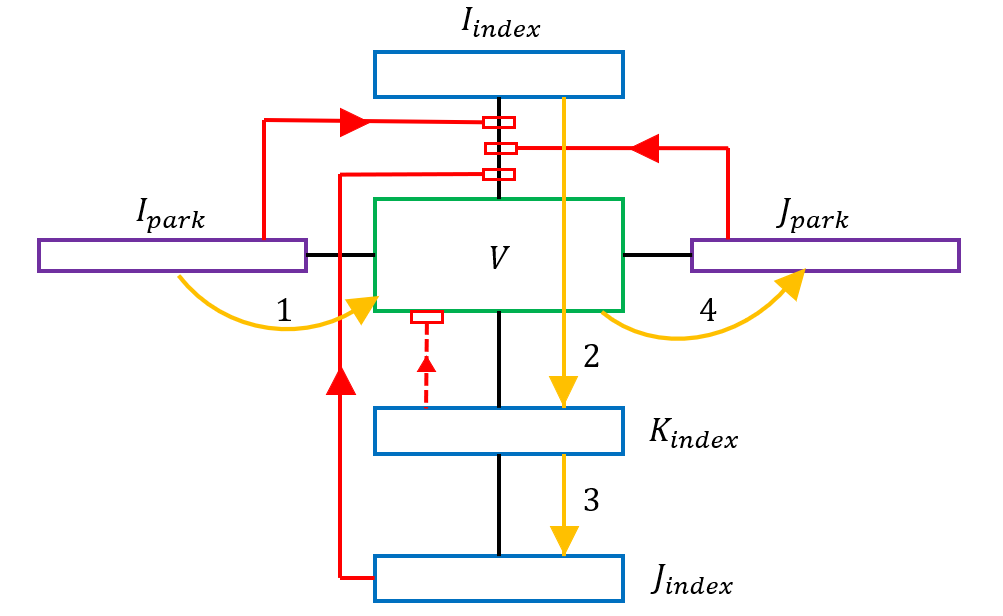}
\caption{Outline of the resulting graph $G^{\prime}$. See the informal overview of the reduction.}
\label{TSR_leafW[1]H_Red}
\vspace{-5mm}
\end{figure}

\subparagraph*{Informal Overview of the Reduction:}
\Cref{TSR_leafW[1]H_Red} shows the outline of the resulting graph $G^{\prime}$.
The green box, denoted by $V$, corresponds to the encoding of vertices in $G$.
The purple and blue boxes contain additional vertices.
Most of the vertices encoding edges in $G$,
are across purple, blue, and green boxes,
and are used to move tokens between these boxes.
The initial independent set $I$ is
$I_{\text{index}}\cup I_{\text{park}}$ and
the final independent set $J$ is
$J_{\text{index}}\cup J_{\text{park}}$.
The reduction constructs graph $G^\prime$ that satisfies
the following properties.
\begin{enumerate}
\item If there is a token in
$I_{\text{park}} \cup J_{\text{park}} \cup K_{\text{index}}$,
then one can not move a token out of $I_{\text{index}}$.
\item {All} the tokens in $I_{\text{index}}$ can be
moved to $K_{\text{index}}$ via $V$ only if
tokens in $V$ (which are moved from $I_{\text{park}}$)
corresponds to a multicolored clique in $G$.
\item Tokens in $K_{\text{index}}$ impose restrictions on movements of tokens in $V$.
\end{enumerate}
These properties imply that to move \emph{all} the tokens
from $I$ to $J$, one needs to move the tokens in the following phases:
In the first phase, one needs to move
all the tokens from $I_{\text{park}}$
to $V$ in such a way that their position corresponds to
a multicolored clique in $G$.
In the second phase, one needs to move all the tokens from
$I_{\text{index}}$ to $K_{\text{index}}$.
This move \emph{locks} the tokens at their places in $V$
and the `clique'-configuration is maintained throughout the movements
of the tokens.
As all the tokens are now out of $I_{\text{index}}$,
one can move all the tokens from
$K_{\text{index}}$ to $J_{\text{index}}$ in the third phase.
Finally, as there are no tokens in $K_{\text{index}}$ now,
one can move all the tokens in $V$ to $J_{\text{index}}$
in the fourth phase.

\begin{figure}[t]
\centering
\includegraphics[scale=0.4]{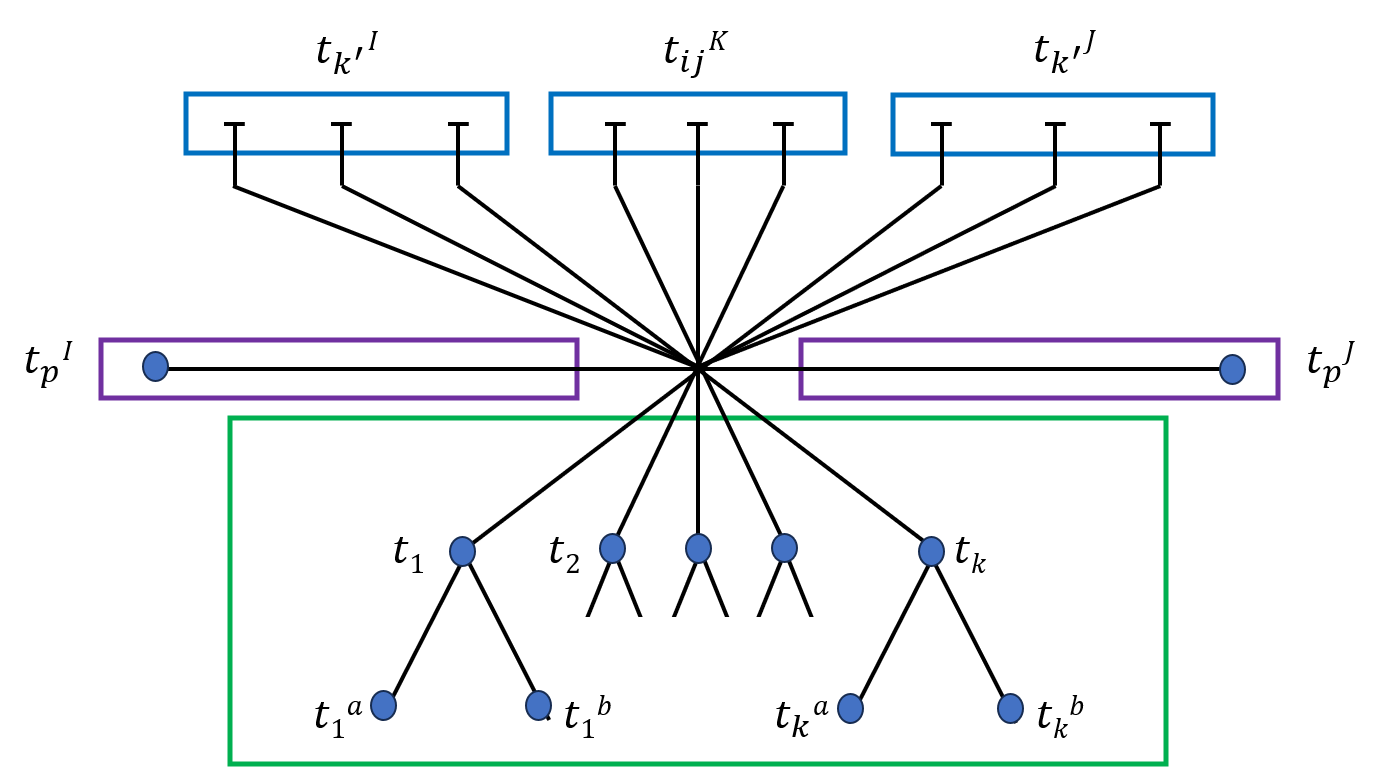}
\caption{The tree model used in the reduction to prove \Cref{thm:W-hardness-leafage-TS-Reach}. The central vertex is $t_0$.
The vertices in the green rectangles are used to encode vertices
in $V(G)$.
The purple rectangles are parking structures for the initial and final
independent set.
The vertices in the blue rectangle are index vertices, making
all of them can only be moved if placements of tokens in
the green rectangle corresponds to a clique in $V(G)$.}
\label{fig:TS-Reach-reduction-reach-tree-structure}
\vspace{-5mm}
\end{figure}

\subparagraph*{Structure of the model tree $\mathcal{T}$:}
To begin with, the model tree $\mathcal{T}$ consists of
a central vertex $t_0$, which is the root and has
$k$ children $t_1,t_2,\ldots,t_k$.
See \Cref{fig:TS-Reach-reduction-reach-tree-structure} for an illustration.
For each $i\in[k]$, $t_i$ has two children $t_i^a$ and $t_i^b$.
The reduction subdivides each edge $t_it_i^a$ and $t_it_i^b$ of the tree $\mathcal{T}$ by adding $2n-1$ new vertices on this edge.
Denote the path between $t_i^a$ and $t_i^b$ by $T_i$ for each $i \in [k]$.
We use $T_i$ to encode vertices in $V_i$.
The collection of vertices on the subtrees of the tree induced by the union of
the vertex sets $\{t_1,t_2,\ldots,t_k\}$, $\{t^a_1,t^a_2,\ldots,t^a_k\}$,
and $\{t^b_1,t^b_2,\ldots,t^b_k\}$ is denoted by $V$.
These vertices encode the vertices $V(G)$ of the input graph $G$.
Add two more children of $t_0$ and label them as $t_P^I$ and $t_P^J$.
We will use $t_P^I$ to park the tokens at the beginning and
$t_P^J$ to park the tokens at the end.

For every $k' \in [\binom{k}{2}]$, add a child of $t_0$ labelled $t^I_{k'}$
and a child of $t_0$ labelled $t^J_{k'}$.
For every $i < j \in [k]$, a child of $t_0$ labelled $t^K_{ij}$.
{We denote $I^{\calT }_{\text{index}} := \{t^I_{k'} \mid\ k' \in [\binom{k}{2}]\} $,
$J^{\calT}_{\text{index}} := \{t^J_{k'} \mid\ k' \in [\binom{k}{2}]\}$ but
$K^{\calT }_{\text{index}} := \{t^K_{ij} \mid\ i < j \in [k]\}$.}
{We highlight the two different ways of indexing these sets.}
A vertex corresponding to $t^K_{ij}$ is associated with
a collection of edges with one endpoint in $V_i$ and another endpoint
in $V_j$.
Whereas, the vertices in $I^{\calT }_{\text{index}}$ and 
$ J^{\calT}_{\text{index}}$
are a collection of $\binom{k}{2}$-many vertices each.
Subdivide each edge of the form $t^I_{k'}t_0$ and $t^J_{ij}t_0$
by adding intermediate vertices $\ell^I_{k'}$ and $\ell^J_{k'}$, respectively.
Similarly, subdivide each edge of the form $t^K_{ij}t_0$ by adding an
intermediate vertex $\ell^K_{ij}$.

\subparagraph*{Encoding the vertices of $G$:}
We add the following two types of vertices in $G^{\prime}$
to encode $n$ vertices in $V_i$ for each $i \in [k]$.
Recall that we have subdivided edge $t_it_i^a$ and edge $t_it_i^b$
of the tree $\mathcal{T}$ and added $2n-1$ new vertices
on each of these edges.
\begin{itemize}
\item Add $n$ new vertices $u_i^1,u_i^2,\ldots,u_i^n$
in $G^\prime$ corresponding to the $n$ disjoint intervals
on this edge, starting from the interval from $t_i^a$
to the first new vertex as shown in \Cref{fig:Encoding_Ver_TSR}.
Also, for every $p \in [n-1]$, add a connector vertex that connects
$u_i^p$ and $u_i^{p+1}$.
\item  Add $n$ new vertices $w_i^1,w_i^2,\ldots,w_i^n$
in $G^\prime$ corresponding to the $n$ disjoint intervals
on this edge, starting from the first new vertex to the interval
before $t_i^b$ as shown in \Cref{fig:Encoding_Ver_TSR}.
As before, for every $p \in [n-1]$, add a connector vertex
that connects $w_i^p$ and $w_i^{p+1}$.
\end{itemize}
Note that if there are $p$ tokens on the left-hand side of $T_i$
(i.e., between $t_i^a$ and $t_i$) it is safe to assume that they are on
$u_i^1,u_i^2,\ldots,u_i^p$. If not, then we can move them to
$u_i^1,u_i^2,\ldots,u_i^p$ by valid token sliding operations
without disturbing the tokens on the rest of the graph.
Similarly, if there are $p$ tokens on the right hand side of $T_i$
(i.e., between $t_i^b$ and $t_i$) it is safe to assume that
they are on $w_i^{n-p+1},w_i^{n-p+2},\ldots,w_i^n$.
If tokens are placed on $u_i^1,u_i^2,\ldots,u_i^p$ and
$w_i^{p+1},w_i^{p+2},\ldots,w_i^n$, it corresponds to
selecting the $p^{th}$ vertex in $V_i$ as a part of the clique.

\begin{figure}[t]
\centering
\includegraphics[scale=0.5]{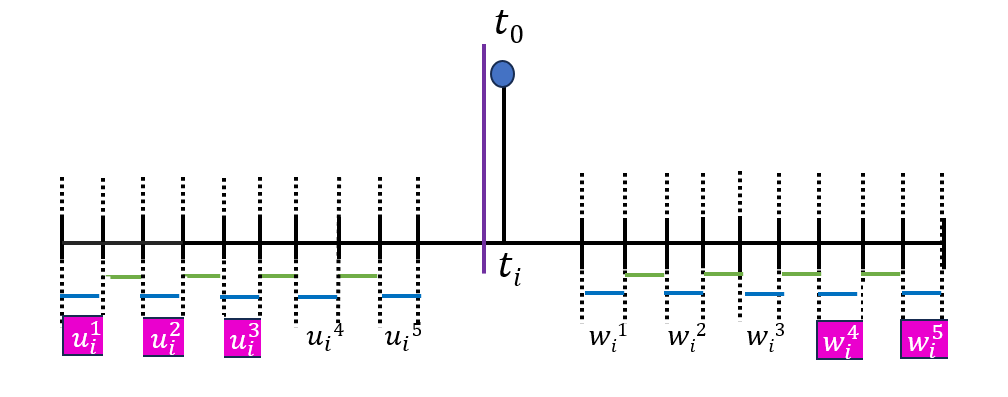}
\caption{Adding vertices corresponding to vertices in $V_i$ to $G'$ when
$n = 5$. Note that on the left side, the indexing starts from the
vertex farthest from $t_i$, whereas on the right side, the indexing
starts from the vertex closest to $t_i$. \label{fig:Encoding_Ver_TSR}
\vspace{-5mm}}
\end{figure}

\subparagraph*{Encoding the edges of $G$:}
Whenever there is an edge from the $p^{th}$ vertex in $V_i$ to the
$q^{th}$ vertex in $V_j$ where $p, q\in[n]$ and $i\neq j\in [k]$,
add a red vertex in $G^\prime$ as described below.
\begin{itemize}
\item Add a line segment starting from $u_i^{p+1}$
(including the starting point of the interval corresponding to $u_i^{p+1}$)
to the endpoint of the interval $w_i^p$
(including the endpoint of the interval corresponding to $w_i^p$).
\item Similarly, add a line segment starting from $u_j^{q+1}$
(including the starting point of the interval corresponding to $u_j^{q+1}$)
to the starting point of the interval $w_j^q$
(including the endpoint of the interval corresponding to $w_j^q$).
\item Add a horizontal line segment from $t_0$ to $t_{ij}^K$
in $K_{\text{index}}$.
\item Connect these three horizontal line segments by
a subpath from $t_i$ to $t_j$ (containing $t_0$).
\end{itemize}
See \Cref{Encoding_edge_TSR}.
We denote the vertices of $G^\prime$ which are added corresponding
to the edges of $G$ by $\calH$-type vertices as we add two structures
similar to a horizontal letter $H$ (with an extra line segment joining the
centre of this structure to $t_0$) corresponding to each edge of $G$.
We denote the $\calH$-type vertex which corresponds to the edge in $G$
corresponding to the edge from the $p^{th}$ vertex in $V_i$ and
the $q^{th}$ vertex in $V_j$ by $r_{ij}^{pq}$.
We use these vertices to move the tokens from $I_{\text{index}}$
to $K_{\text{index}}$ and from $K_{\text{index}}$ to $J_{\text{index}}$.
We make the following modifications to make the first type of movement possible.
Recall that we have subdivided each edge of the
form $t^K_{ij}t_0$ by adding an intermediate vertex $\ell^K_{ij}$.
\begin{itemize}
\item
Extend the sub-tree corresponding to each red vertex
$r^{pq}_{ij}$ till this intermediate vertex
$\ell^K_{ij}$ (or $\ell^K_{ji}$ if $i>j$).
\end{itemize}

\begin{figure}
\centering
\includegraphics[scale=0.45]{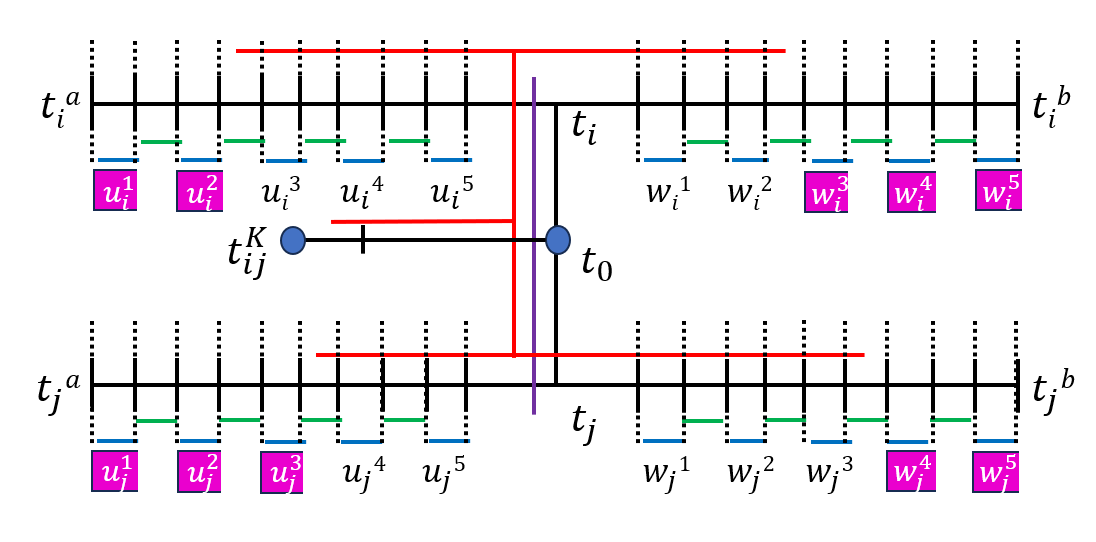}
\caption{Encoding the edge between $3^{rd}$ vertex in $V_j$ and $2^{nd}$ vertex of $V_i$.}
\label{Encoding_edge_TSR}
\vspace{-5mm}
\end{figure}

\begin{figure}[t]
\centering
\includegraphics[scale=0.45]{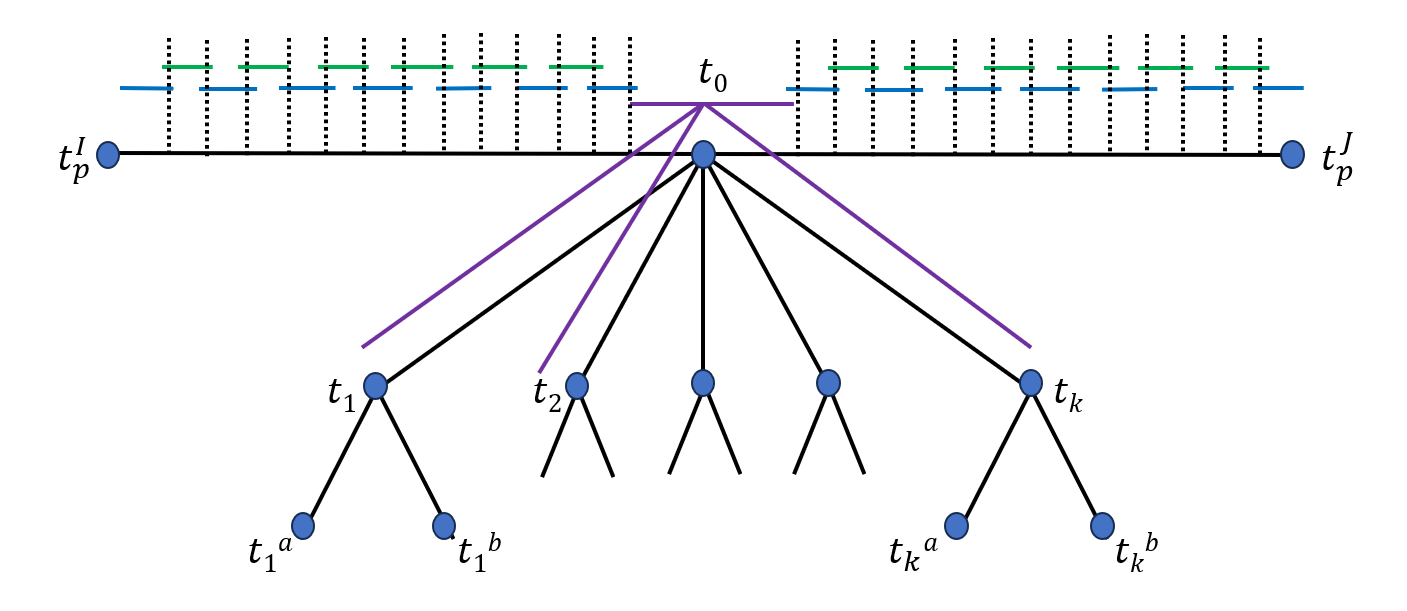}
\caption{Parking Structure; $nk$ many tokens can be parked in $J_{\text{park}}$ and $I_{\text{park}}$ each.}
\label{park_struc_TSR}
\vspace{-5mm}
\end{figure}

\subparagraph*{Auxiliary vertices:}
We start with the parking structure.
We add two parking structures to the graph
viz one by subdividing $t_0t^I_p$ and another one by
subdividing $t_0t^J_p$.
\begin{itemize}
\item We add a path in $G^\prime$ with $nk$ blue vertices
(disjoint intervals) between $t_0$ and $t_P^J$ as shown \Cref{park_struc_TSR}.
We connect these vertices using $(nk-1)$ green vertices as shown.
This allows us to park $nk$ many tokens in the
parking structure $J_\text{park}$.
Similarly, add a path with $nk$ blue vertices (disjoint intervals)
connected by $nk-1$ green vertices between $t_0$ and $t_P^I$.
\item We add a purple vertex $b^\star$ whose model is
$\{t_0,t_1,t_2,\ldots,t_k\}$.
This vertex acts as a bridge that takes tokens from the
$T_i$ part (for some $i\in [k]$) if we are able to get a token to $t_i$.
\end{itemize}
Define $I_{\text{park}}$ as the collection of the vertices
corresponding to the subpath of $\mathcal{T}$ from
$t_P^I$ to $t_0$ (including $t_P^I$ but excluding $t_0$).
Similarly, $J_{\text{park}}$ denotes the vertices corresponding
the subpath of $\mathcal{T}$ from $t_P^J$ to $t_0$
(including $t_P^J$ but excluding $t_0$).
Next, we add vertices in $I_{\text{index}}, J_{\text{index}}$ and
$K_{\text{index}}$.
\begin{itemize}
\item For every $k' \in [\binom{k}{2}]$,
add two vertices in $G^{\prime}$ whose models are
$\{t^I_{k'}\}$ and $\{\ell^I_{k'}\}$, respectively.
These are demonstrated by the dark blue and
orange intervals
in \Cref{fig:TS-Reach-auxillary-vertices}.
\item For every pair $i < j \in [k]$, add a vertex in $G^{\prime}$,
denoted $g_{ij}$, whose model is $\{t^K_{ij},\ell^K_{ij}\}$.
These are shown by the green vertices in \Cref{fig:TS-Reach-auxillary-vertices}.
\item For every $k' \in [\binom{k}{2}]$,
add a vertex, denoted by $p_{k'}$, in $G^{\prime}$
whose model is $\{t^J_{k'},\ell^J_{k'}\}$.
These are depicted by the purple vertices in
\Cref{fig:TS-Reach-auxillary-vertices}.
\end{itemize}
We now define $I_{\text{index}}:=\{\ell^I_{k^\prime}\mid k^\prime \in [\binom{k}{2}]\}$,
$J_{\text{index}}:= \{p_{k^\prime}\mid k^\prime \in [\binom{k}{2}] \}$, and
$K_{\text{index}}:=\{g_{ij}\mid i<j \in [k]\}$.
\begin{figure}
\centering
\includegraphics[scale=0.45]{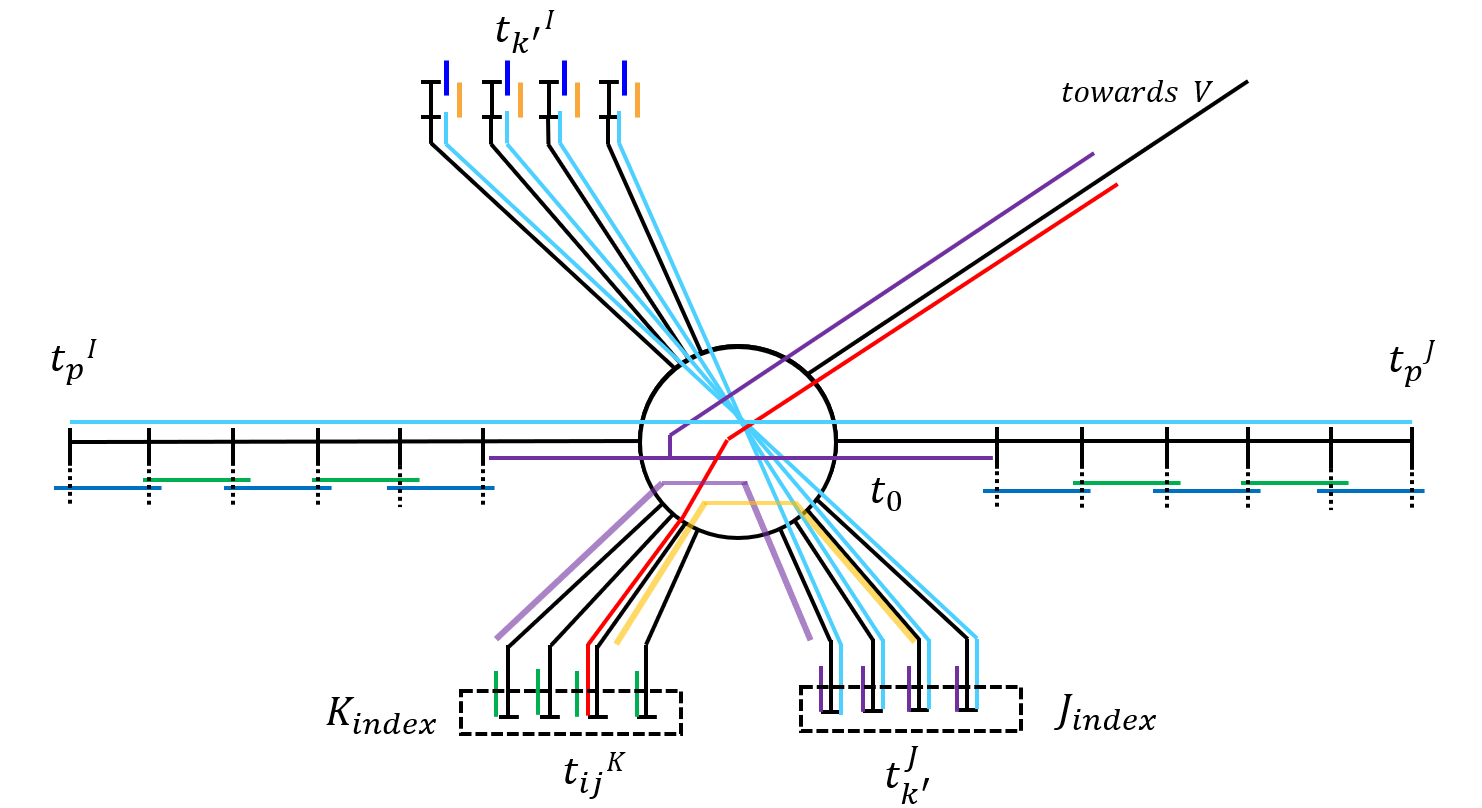}
\caption{Adding the Auxillary vertices}
\label{fig:TS-Reach-auxillary-vertices}
\vspace{-5mm}
\end{figure}
Next, we describe the construction of the
structure, which restricts the movement of tokens.
\begin{itemize}
\item Add the cyan vertex as in \Cref{fig:TS-Reach-auxillary-vertices}
which corresponds to a  star-like structure
centered at $t_0$ and covering all the vertices on the path from
$t_0$ to $t_P^I$ and on the path from $t_0$ to $t_P^J$.
This vertex also covers the paths from $t_0$ to $\ell^J_{k^\prime}$
for $k^\prime \in [\binom{k}{2}]$.
We refer to this vertex as the first \emph{choke-type vertex}
and denote it by $\calC_1$.
\end{itemize}
Finally, we describe the additional vertices in $G$ which
will facilitate the movement of the tokens from
$K_{\text{index}}$ to $J_{\text{index}}$
in case of a \yes\ instance of \textsc{Multicolored Clique}.
\begin{itemize}
\item Add a connector vertex $c^{ij}$ corresponding
to the sub-tree formed by $\ell^K_{ij},t_0$ and $\ell^J_{k^\prime}$
for each $i<j \in [k]$ and $k^\prime \in [\binom{k}{2}]$. This is shown in
 \Cref{fig:TS-Reach-auxillary-vertices}.
\end{itemize}

This completes the description of the construction of graph $G^\prime$.
The reduction sets the initial independent set $I$ is
$I_{\text{index}}\cup I_{\text{park}}$ and
the final independent set $J$ is
$J_{\text{index}}\cup J_{\text{park}}$.
Finally, it returns $(G^\prime, I, J)$ is an instance of \textsc{TS-Reachability}.
In the next two lemmas, we prove that the reduction is safe.

\begin{restatable}{lemma}{tsReachForward}
\label{lemma:forward-reduction-leafage-TS-Reach}

If $(G,\langle V_1,V_2,\ldots,V_k\rangle, k)$ is a \yes\ instance of \textsc{MultiCol Clique},
then $(G^\prime, I, J)$ is a \yes\ instance of \textsc{TS-Reachability}.
\end{restatable}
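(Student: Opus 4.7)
The plan is to construct, given a multicolored clique $X = \{v_1^{p_1}, \ldots, v_k^{p_k}\}$ in $G$, a valid sequence of token slides from $I$ to $J$, organized into the four phases of the informal overview: first move the $nk$ tokens in $I_{\text{park}}$ into $V$ so that their positions encode $X$; second move the tokens of $I_{\text{index}}$ into $K_{\text{index}}$; third move $K_{\text{index}}$ into $J_{\text{index}}$; and finally move the tokens of $V$ into $J_{\text{park}}$.

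For Phase 1, I would take tokens from $I_{\text{park}}$ one at a time, always the one currently closest to $t_0$, slide it onto the purple bridge vertex $b^\star$, and then along $T_i$ (for the appropriate $i$) to its intended position in $\{u_i^1, \ldots, u_i^{p_i}\} \cup \{w_i^{p_i+1}, \ldots, w_i^n\}$. This is legal because the model $\{t_0, t_1, \ldots, t_k\}$ of $b^\star$ is disjoint from the models of every token in $I_{\text{index}}, J_{\text{park}}, J_{\text{index}}, K_{\text{index}}$, and the sliding order inside each $T_i$ (first $w_i^n, w_i^{n-1}, \ldots, w_i^{p_i+1}$ from the right, then $u_i^1, u_i^2, \ldots, u_i^{p_i}$ from the left) ensures every incoming token has an unobstructed path. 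For Phase 2, with the configuration in $V$ now encoding $X$, the clique assumption guarantees that for every pair $i<j \in [k]$ the $\calH$-type vertex $r_{ij}^{p_i p_j}$ exists in $G'$. The key observation, essentially the ``if'' direction of the analogue of \Cref{cl:edge}, is that when the tokens in $T_i$ and $T_j$ are placed at positions $(p_i,p_j)$, no token in $V$ is adjacent to $r_{ij}^{p_i p_j}$, so it can serve as a bridge: route the token from $\ell^I_{k'}$ (where $k'$ indexes $(i,j)$) onto $g_{ij} \in K_{\text{index}}$ using $r_{ij}^{p_i p_j}$, whose model extends to $\ell^K_{ij}$.

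Phase 3 then moves each token on $g_{ij}$ to the corresponding $p_{k'} \in J_{\text{index}}$ through the connector $c^{ij}$, whose subtree model $\{\ell^K_{ij}, t_0, \ell^J_{k'}\}$ is now free of adjacent tokens since $I_{\text{index}}$ and $I_{\text{park}}$ are empty and no token is ever placed on the choke vertex $\mathcal{C}_1$. Phase 4 is the mirror of Phase 1: each token in $V$ slides back through $b^\star$ into $J_{\text{park}}$, filling from $t_P^J$ inward, which is legal because $K_{\text{index}}$ is empty. The main obstacle is the detailed bookkeeping verifying validity of every individual slide, particularly in Phase 2, where the clique hypothesis is used in an essential way to furnish the bridge $r_{ij}^{p_i p_j}$ for each pair — if any $(i,j)$ failed to be an edge of $G$, that bridge would be absent and the corresponding index token could not reach $K_{\text{index}}$. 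Non-adjacency checks with $\mathcal{C}_1$, $b^\star$, and the remaining auxiliary vertices at each intermediate configuration reduce to a direct case analysis based on which of the $I$- and $J$-side parking structures are currently occupied, and the construction was tailored so that this analysis goes through at every step.
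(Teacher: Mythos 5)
Your proposal follows the same four-phase plan as the paper, and Phases 2 and 3 match the paper's argument (route index tokens through $b^\star$ and the usable clique-edge $\calH$-type vertices into $K_{\text{index}}$, then through the connectors $c^{ij}$ into $J_{\text{index}}$). There is, however, one concrete gap in Phases 1 and 4: you slide a token ``onto $b^\star$, and then along $T_i$ to its intended position.'' No such move exists. The model of $b^\star$ is $\{t_0,t_1,\ldots,t_k\}$, while every $u_i^p$ and $w_i^p$ is an interval strictly inside the subdivided edges $t_it_i^a$ and $t_it_i^b$ that does not reach $t_i$; the only vertices of $G'$ whose models contain both $t_i$ and points in the interior of $T_i$ are the $\calH$-type vertices. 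So every entry into (and, in Phase 4, exit from) $T_i$ must pass through a usable $\calH$-type vertex adjacent to $T_i$, and then travel to its target via the connector vertices.

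This is not mere bookkeeping, because it relocates where the clique hypothesis is used. You assert that the clique hypothesis is needed ``in an essential way'' only in Phase 2; in fact the paper's Phase 1 argument is built entirely around the observation that the $\calH$-type vertex $r_{ij}^{p_ip_j}$ of a clique edge meets $T_i$ exactly in $\{u_i^{p_i+1},\ldots,u_i^n\}\cup\{w_i^1,\ldots,w_i^{p_i}\}$, which is disjoint from the target positions $\{u_i^1,\ldots,u_i^{p_i}\}\cup\{w_i^{p_i+1},\ldots,w_i^n\}$; hence these gateways remain usable even as tokens accumulate in $V$, which is what lets all $nk$ parked tokens enter the $T_i$'s (and, symmetrically, leave them in Phase 4 once $K_{\text{index}}$ is emptied). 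With that correction inserted, your ordering of the slides within each $T_i$ and the rest of the argument go through and coincide with the paper's proof.
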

\begin{proof}
Suppose $G$ has a multicolored clique consisting of the $p_i^{th}$
vertex from each $V_i$.
We move all the tokens from $I_{\text{park}}$
to $V$ in such a way that their position corresponds to
a multicolored clique in $G$.
Formally, we move the vertices in $I_\text{index}$ to $V$ such that each
$T_i$ has tokens on $\{u_i^1,u_i^2,\ldots,u_i^{p_i}\}$ and
$\{w_i^{p_i+1},w_i^{p_i+2},\ldots,w_i^n\}$.
We say an $\calH$-type vertex is \emph{usable} (to move tokens)
if it is not adjacent to any token.
Consider an edge incident on $q^{th}$ vertex in $V_i$
and the $\calH$-type vertex added to $G^{\prime}$ to encode it.
By the construction, the interval starting from $u_i^{q+1}$ to $w_i^q$
(both inclusive) is a part of the model of $\calH$-type vertex.
This implies edges whose both endpoints are in the multicolored clique,
have no token adjacent to them even when some are moved to $V$.
Hence, the edges in the multiclique mentioned above
are always usable to move (remaining) tokens from $I$ to $V$.
And hence, we can move all $k \cdot n$ tokens from $I$ to $V$.

Using the same argument as above,
one can move all the tokens from $I_{\text{index}}$ to $K_{\text{index}}$.
Formally, consider an index $k' \in \binom{k}{2}$ and $i, j \in [k]$.
One can move a token on $\ell_{k'}^I$ in $I_{index}$
to $g_{ij}^K$ in $K_{index}$ by moving it to $b^*$
and then using the red $\calH$-type vertex corresponding to
the edge in multicolored clique across the vertex in $V_i$ and $V_j$.

In the third phase, one can move all ${k}\choose{2}$ tokens in
$K_\text{index}$ to $J_\text{index}$ by first moving a token
on $g^{ij}$ to $c^{ij}$ and then to $p^{k^\prime}$.
Note that this is possible as before starting this phase,
all the tokens are out of $I_{\text{index}}$.

Finally, as there are no tokens in $K_{\text{index}}$,
one can move all the tokens in $V$ to $J_{\text{index}}$
in the fourth phase.
If there are ${k}\choose{2}$ tokens in $J_\text{index}$ and
$nk$ tokens in $V$, all the tokens in $V$ can be brought to $J_\text{park}$.
Each token in $V$ (in some $T_i$) can be brought to the vertex
$b^*$ and then to the last unoccupied blue vertex in $J_\text{park}$.
Hence, we can move all the tokens in $K_{\text{index}}$ to $J_{\text{index}}$
and then all the tokens in $V$ to $J_{\text{park}}$.
Thus, the new independent set occupied by the tokens is $J$
which completes the proof.
\end{proof}

\begin{lemma}
\label{lemma:backward-reduction-leafage-TS-Reach}
If $(G^\prime, I, J)$ is a \yes\ instance of \textsc{TS-Reachability}
then $(G,\langle V_1,V_2,\ldots,V_k\rangle, k)$ is a \yes\ instance of \textsc{MultiCol Clique}.
\end{lemma}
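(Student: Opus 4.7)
The plan is to invert the pipeline used in \cref{lemma:forward-reduction-leafage-TS-Reach} and extract a multicolored clique of $G$ from any valid reconfiguration sequence $I = S_0, S_1, \ldots, S_t = J$. I would first establish the routing constraints imposed by the construction: the purple vertex $b^{\star}$ is the unique bridge between $V$ and the remainder of $G^{\prime}$, so every token that crosses between these two regions must occupy $b^{\star}$ at some step; and the choke vertex $\calC_1$ is adjacent to every vertex of $I_{\text{park}}, J_{\text{park}}, J_{\text{index}}$, to $b^{\star}$, and to every $\calH$-type and connector vertex, so none of these sites can host a token simultaneously with $\calC_1$. Using these observations, I would argue that the only channel available for a token placed on some $\ell^{I}_{k'} \in I_{\text{index}}$ to leave that vertex is to slide onto $b^{\star}$ and then traverse an $\calH$-type vertex $r_{ij}^{pq}$ into $g_{ij} \in K_{\text{index}}$; alternative routes into $V$, into the parking structures, or directly into $J_{\text{index}}$ are blocked by the tokens still sitting in $I_{\text{park}}$ together with the choke vertex.

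Next, for every pair $i<j \in [k]$, I would select the first step $\tau_{ij}$ at which some token originating in $I_{\text{index}}$ traverses an $\calH$-type vertex of the form $r_{ij}^{p_{ij}\, q_{ij}}$; such a step exists because all $\binom{k}{2}$ tokens of $I_{\text{index}}$ must vacate $I_{\text{index}}$ by step $t$ while the $\binom{k}{2}$ slots of $K_{\text{index}}$ are indexed by the pairs $(i,j)$. At step $\tau_{ij}$, independence forces every vertex in the closed neighborhood of $r_{ij}^{p_{ij}\,q_{ij}}$ in $G^{\prime}$ to be token-free apart from the one on $r_{ij}^{p_{ij}\,q_{ij}}$ itself. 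Translating the overlap structure through the tree model, this yields a clique configuration: $T_i$ must be fully loaded with $n$ tokens placed on $\{u_i^1, \ldots, u_i^{p_{ij}}\} \cup \{w_i^{p_{ij}+1}, \ldots, w_i^n\}$, and symmetrically $T_j$ at position $q_{ij}$. Since $r_{ij}^{p_{ij}\,q_{ij}}$ was added to $G^{\prime}$, the pair $\{v_i^{p_{ij}}, v_j^{q_{ij}}\}$ is an edge of $G$.

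The main obstacle will be to show that the indices $p_{ij}$ are consistent across different pairs, namely that $p_{ij}$ depends only on $i$ and not on $j$. For this I plan to prove the invariant that, once $T_i$ becomes fully loaded at position $p$, that position cannot change while any token originating in $I_{\text{index}}$ still has to exit. Indeed, shifting the boundary of $T_i$ from $p$ to $p\pm 1$ requires some token of $T_i$ to traverse $t_i$, and hence $b^{\star}$, but the tokens still stored in $I_{\text{park}}$ (or newly deposited in $K_{\text{index}}$) combined with the choke vertex $\calC_1$ block any such intermediate slide. Consequently, there is a single tuple $(p_1, \ldots, p_k)$ describing the loading of $T_1, \ldots, T_k$ across all critical steps, and the edges witnessed at the $\tau_{ij}$'s together certify that $\{v_1^{p_1}, \ldots, v_k^{p_k}\}$ is a multicolored clique in $G$, completing the proof.
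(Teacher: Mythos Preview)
Your overall strategy tracks the paper's closely: both arguments route the $I_{\text{index}}$ tokens through $\calH$-type vertices into $K_{\text{index}}$ and extract a clique from the positions in each $T_i$ that make these $\calH$-vertices usable. The difference is that the paper argues statically---after all $nk$ tokens of $I_{\text{park}}$ have been moved to $V$ and before any $I_{\text{index}}$ token moves, the configuration in $V$ must render $\binom{k}{2}$ red vertices (one for each pair $i<j$) simultaneously usable, and then the counting in Claims~\ref{cl:atmostn+1tokens} and~\ref{cl:usable} forces exactly $n$ tokens per $T_i$ at a unique position $p_i$, yielding the clique in one shot. You instead try to assemble the clique dynamically from possibly different time steps $\tau_{ij}$ and then prove that the $p_{ij}$ are consistent.

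The gap is in your consistency invariant. The claim that shifting the boundary of $T_i$ from $p$ to $p\pm 1$ forces a token through $b^{\star}$ is not correct. A token on $w_i^{p+1}$ is adjacent to the $\calH$-type vertex $r_{il}^{(p+1)q}$ (for any $l$ and any $q$ for which such an edge exists), and from that $\calH$-vertex the token can slide to $u_i^{p+2}$ and then to $u_i^{p+1}$ via a connector---never touching $b^{\star}$. Your fallback, that tokens in $I_{\text{park}}$ or $K_{\text{index}}$ ``combined with $\calC_1$'' block this, does not work either: at this stage $I_{\text{park}}$ is already empty (that was the precondition for moving any $I_{\text{index}}$ token), and adjacency is not transitive through $\calC_1$. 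A token on some $g_{i'j'}\in K_{\text{index}}$ only blocks the $\calH$-vertices $r_{i'j'}^{\ast\ast}$; as long as some $g_{il}$ is still empty and $T_l$ happens to sit at a compatible position, the slide through $r_{il}^{(p+1)q}$ is legal and the boundary of $T_i$ can change between two of your chosen times $\tau_{ij}$ and $\tau_{ij'}$. So the invariant as stated fails, and with it the consistency of the $p_{ij}$.

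To close the argument along your dynamic route you would need a stronger locking statement---for instance, that once the first token lands in $K_{\text{index}}$ no $\calH$-type vertex touching the already-committed $T_i$'s can be reused to shuffle tokens inside $V$---and that requires a more careful inductive accounting than what you have sketched. The paper avoids this by collapsing everything to a single moment, which is why its counting claims suffice.
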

\begin{proof}
We first prove some claims regarding the possible movements
of the tokens.

\begin{restatable}{claim}{tsReachclChoke}
\label{cl:choke1}
If there is a token in 
$I_{\text{park}} \cup J_{\text{park}} \cup K_{\text{index}}$,
then one can not move any token out of $I_{\text{index}}$.
\end{restatable}
\begin{claimproof}
If a token from $I_\text{index}$ must be moved outside
$I_\text{index}$, then it must be first brought to an orange vertex.
The orange vertex is adjacent to $\calC_1$.
Therefore, no vertex adjacent to $\calC_1$ can already
contain a token.
Since all the vertices in
$I_{\text{park}}\cup J_{\text{park}} \cup J_{\text{index}}$
are adjacent to $\calC_1$, if
$I_{\text{park}}\cup J_{\text{park}} \cup J_{\text{index}}$
has at least one token, then all the tokens in
$I_\text{index}$ must be on the dark blue vertices
no token can be moved to an orange vertex.
Hence no token can be moved outside $I_\text{index}$.
\end{claimproof}

\noindent Hence, to move the vertices from $I_{\text{index}}$,
we must move the vertices out of $I_{\text{park}}$.
Moreover, we can not move these tokens from $I_{\text{park}}$ 
to $J_{\text{park}}$ or to $J_{\text{index}}$ 
before all the tokens out of $I_{\text{index}}$.
We are also not be able to move any of these tokens to
$K_\text{index}$ as once a token lands on
$g_{ij}\in K_\text{index}$, the red vertex $r_{ij }$
cannot be used for any other token as it is adjacent to $g_{ij}$.
This implies that no token can be brought inside or outside
$T_i\cup T_j$.
Thus some of the tokens in $I_{\text{index}}$ will get stuck
if all of them are not moved to $K_\text{index}$ in the first phase.
Hence, one needs to move all the tokens in $I_{\text{park}}$
to $V$.
The next two claims argue about the movements
of such tokens.

\begin{restatable}{claim}{tsReachclAtmostNtokens}
\label{cl:atmostn+1tokens} 
For any $i \in [k]$,
{for an $\calH$-type vertex adjacent to some $T_i$ to remain
usable, one can move at most $n$ tokens to $T_i$.}
\end{restatable}
\begin{claimproof}
Suppose one has moved $n + 1$ tokens to  $T_i$.
Without loss of generality, we can assume that the tokens
are on $u_i^1,u_i^2,\ldots,u_i^p$ and
$w_i^n,w_i^{n-1},\ldots,w_i^p$
i.e., $p$ vertices on the left hand side of $T_i$
and $n+1-p$ vertices on the right hand side of $T_i$
where $p \in [n]$.
Consider an edge incident on $q^{th}$ vertex in $V_i$
and the $\calH$-type vertex added to $G^{\prime}$ to encode it.
By the construction, the interval starting from $u_i^{q+1}$ to $w_i^q$
(both inclusive) is a part of the model of $\calH$-type vertex.
Define $S_p = \{u_i^1,u_i^2,\ldots,u_i^p\}
\cup \{w_i^n,w_i^{n-1},\ldots,w_i^p\}$.
If $p\geq q+1$, then $S_p$ contains $u_i^{q+1}$.
Otherwise, $p\leq q$ and $S_p$ contains $w_q$.
Thus, if there are $n+1$ tokens in $T_i$, then no $\calH$-type vertex
adjacent to $T_i$ is usable.
\end{claimproof}

\noindent By the construction, at least one $\calH$-type vertex is needed
to move tokens from $t_i$ to other vertices in $T_i$.
Hence, the above claim implies that
one can move at most $n + 1$ tokens to $T_i$.

\begin{restatable}{claim}{tsReachclUsable}
\label{cl:usable}
If $T_i$ contains $n$ tokens, then $\calH$-type vertex corresponding to an edge adjacent to the $q^{th}$ vertex in $V_i$ is be usable if and only 
if it has no tokens outside $T_i$ and the tokens in $T_i$ are on 
$\{u_i^1,u_i^2,\ldots,u_i^q,w_i^n,w_i^{n-1},\ldots,w_i^{q+1}\}$.
\end{restatable}
\begin{claimproof}
Without loss of generality, we can assume that the tokens are present on
$u_i^1,u_i^2,\ldots,u_i^p$ and $w_i^n,w_i^{n-1},\ldots,w_i^{n-p+1}$
i.e., $p$ vertices on the left hand side of $T_i$ and $n-p$ vertices on the
right hand side of $T_i$ where $p\in [n]$.
If there is an edge adjacent to the $q^{th}$ vertex in $V_i$ in $G$,
then the $\calH$-type vertex corresponding to this edge
intersects $T_i$ at $\{u_i^{q+1},u_i^{q+2} \ldots,u_i^n\} \cup \
\{w_i^1,w_i^2,\ldots,w_i^q\}$.
If $p=q$, then there are $n$ tokens on
$\{u_i^1,u_i^2,\ldots,u_i^q\} \cup
\{w_i^n,w_i^{n-1},\ldots,w_i^{q+1}\}$,
then this has an empty intersection with
$\{u_i^{q+1},u_i^{q+2},\ldots,u_i^n\} \cup \{w_i^1,w_i^2,\ldots,w_i^q\}$
and thus the corresponding $\calH$-type vertex is usable
if there is no token outside $T_i$ adjacent to it.
If $p\neq q$, the intersection set $S$ of the tokens in $T_i$, i.e.,
$\{u_i^1,u_i^2,\ldots,u_i^p\} \cup
\{w_i^n,w_i^{n-1},\ldots,w_i^{n-p+1}\}$
and the vertices of $T_i$ which are adjacent to the $\calH$-type
vertex corresponding to the edge incident on the $q^{th}$ vertex
of $V_i$ given by
$\{u_i^{q+1},u_i^{q+2} \ldots,u_i^n\} \cup \{w_i^1,w_i^2,\ldots,w_i^q\}$
will be of size at least one.
If $p<q$, then $S$ will contain $w_q$ and if $p>q$,
then $S$ will contain $u_i^{q+1}$.
Therefore the $\calH$-type vertex cannot be usable.
\end{claimproof}

\noindent Now all the tokens in $I_{\text{index}}$ can be moved to
$K_{\text{index}}$ if and only if all the tokens in
$I_{\text{park}}$ are moved to $V$ such that
${k}\choose{2}$ of the red ($\calH$-type) vertices are usable
(i.e., do not have any token adjacent to them).
The red ($\calH$-type) vertices corresponding to the edges between the 
$p^{th}$ vertex in $V_i$ and the $q^{th}$ vertex in $V_j$ are usable 
for $i<j$. 
Thus we have ${k}\choose{2}$ red ($\calH$-type) vertices which are usable.

If there are ${k}\choose{2}$ red ($\calH$-type) vertices which are usable,
then each distinct pair of $i\neq j\in [k]$ must contribute one
usable red vertex.
This is because any arrangement of $n+1$ tokens on $T_i$ cannot
contribute a usable $\calH$-type vertex by \Cref{cl:atmostn+1tokens},
and an arrangement of $n$ tokens on $T_i$ can contribute at most one
usable $\calH$-type vertex.
A pair of distinct indices $i,j\in [k]$ can contribute a usable
$\calH$ type vertex $r^{pq}_{ij}$ if and only if there are tokens
on $\{u_i^1,u_i^2,\ldots,u_i^{p}\}$ and
$\{w_i^{p+1},w_i^{p+2},\ldots,w_i^n\}$ in $T_i$;
and $\{u_j^1,u_j^2,\ldots,u_j^{q}\}$ and
$\{w_j^{q+1},w_j^{q+2},\ldots,w_i^n\}$ in $T_j$.
The existence of $r^{pq}_{ij}$ implies the existence of
an edge between the $p^{th}$ vertex of $V_i$ and
the $q^{th}$ vertex of $V_j$ in $G$.
Thus the existence of ${k}\choose{2}$ red ($\calH$-type)
vertices that are usable imply the existence of a 
multicolored clique in $G$.
\end{proof}

\Cref{lemma:forward-reduction-leafage-TS-Reach} and
\Cref{lemma:backward-reduction-leafage-TS-Reach}
implies that the reduction is safe.
This, along with the fact that the reduction works in polynomial time
and the leafage $\ell$ of the chordal graph $G^\prime$
is $3\cdot {{k}\choose{2}}+2k+2$
implies the 
proof of \Cref{thm:W-hardness-leafage-TS-Reach}.

\section{Conclusion}
\label{sec:conclusion}
In this article, we studied the \textsc{Token Sliding Connectivity} 
and \textsc{Token Sliding Reachability} problems when the input 
is restricted to a chordal graph. 
We prove that both these problems are \para-\NP-hard 
when parameterized by the maximum clique-degree of the chordal graph. 
This answers the open question posed by 
Bonamy and Bousquet~\cite{DBLP:conf/wg/BonamyB17}. 
We then consider the complexity of the problems when 
parameterized by the larger parameter {`leafage'} and 
prove that \textsc{Token Sliding Connectivity} and 
\textsc{Token Sliding Reachability} are \co-\W[1]-hard and \W[1]-hard, 
respectively. 
We conjecture that both these problems admit an \XP-algorithm, i.e., an 
algorithm with running time $n^{f(\ell)}$, when parameterized by the leafage. 
It would also be interesting to {investigate} whether well-partitioned chordal 
graphs, introduced in \cite{DBLP:journals/dm/AhnJKL22}, can be used to 
narrow down the complexity gaps for these problems. 
Formally, do these problems admit \FPT\ algorithms when 
parameterized by the leafage when the input graph is restricted 
to a well-partitioned chordal graph?

\bibliography{references}

\end{document}